\documentclass[preprint,12pt]{elsarticle}

%% Use the option review to obtain double line spacing
%% \documentclass[authoryear,preprint,review,12pt]{elsarticle}

%% Use the options 1p,twocolumn; 3p; 3p,twocolumn; 5p; or 5p,twocolumn
%% for a journal layout:
%% \documentclass[final,1p,times]{elsarticle}
%% \documentclass[final,1p,times,twocolumn]{elsarticle}
%%\documentclass[final,3p,times]{elsarticle}
%% \documentclass[final,3p,times,twocolumn]{elsarticle}
%%\documentclass[final,5p,times]{elsarticle}
%% \documentclass[final,5p,times,twocolumn]{elsarticle}

%% For including figures, graphicx.sty has been loaded in
%% elsarticle.cls. If you prefer to use the old commands
%% please give \usepackage{epsfig}

\usepackage{amssymb}
\usepackage{amsmath}
\usepackage{amsthm}
\usepackage{amsfonts}
\usepackage{tikz}
\usepackage{hyperref}
\usepackage{graphicx} 
\usepackage{lineno}

\usetikzlibrary{calc, shapes, fit, arrows.meta}

\newtheorem{theorem}{Theorem}

\newtheorem{lemma}{Lemma}

\journal{Applied Mathematical Modelling}

\begin{document}

    \begin{frontmatter}

        %% Title, authors and addresses
        \title{A Game Theoretic Treatment of Contagion in Trade Networks}

        \author[label1,label3,label4]{John S. McAlister \corref{cor1}}
            \ead{jmcalis6@vols.utk.edu}
            \cortext[cor1]{Corresponding Author}
        \author[label5]{Jesse L. Brunner} 
        \author[label6]{Danielle J. Galvin} 
        \author[label1,label7,label3,label4]{Nina H. Fefferman}
        \affiliation[label1]{organization= {University of Tennessee - Knoxville, Department of Mathematics}}
        \affiliation[label7]{organization= {University of Tennessee - Knoxville, Department of Ecology and Evolutionary Biology}}
        \affiliation[label3]{organization= {National Institute for Modeling Biological Systems}}
        \affiliation[label4]{organization= {NSF Center for the Analysis and Prediction of Pandemic Expansion}}
        \affiliation[label5]{organization= {Washington State University, School of Biological Sciences}}
        \affiliation[label6]{organization= {University of Tennessee - Knoxville, School of Natural Resources}}

        \date{April 2025}

        %% Abstract
        \begin{abstract}
            Global trade of material goods involves the potential to create pathways for the spread of infectious pathogens. One trade sector in which this synergy is clearly critical is that of wildlife trade networks. This highly complex system involves important and understudied bidirectional coupling between the economic decision making of the stakeholders and the contagion dynamics on the emergent trade network. While each of these components are independently well studied, there is a meaningful gap in understanding the feedback dynamics that can arise between them. In the present study, we describe a general game theoretic model for trade networks of goods susceptible to contagion. The primary result relies on the acyclic nature of the trade network and shows that, through the course of trading with stochastic infections, the probability of infection converges to a directly computable fixed point. This allows us to compute best responses and thus identify equilibria in the game. We present ways to use this model to describe and evaluate trade networks in terms of global and individual risk of infection under a wide variety of structural or individual modifications to the trade network. In capturing the bidirectional coupling of the system, we provide critical insight into the global and individual drivers and consequences for risks of infection inherent in and arising from the global wildlife trade, and any economic trade network with associated contagion risks.  
        \end{abstract}

        %%Graphical abstract
            %\begin{graphicalabstract}
            %\includegraphics{grabs}
        %\end{graphicalabstract}

        %%Research highlights
        \begin{highlights}
            \item Our novel game theoretic model captures feedback between economic and health related factors in the wildlife trade
            \item Acyclic wildlife trade networks have a stationary probability of infection
            \item Best responses can be computed in acyclic trade networks to find Nash equilibria
        \end{highlights}

        %% Keywords
        \begin{keyword}
            Trade Networks \sep Network Contagion \sep Dynamic Games

            \MSC[2008] 05C90 \sep 39A05 \sep 91A43
        \end{keyword}
    \end{frontmatter}

    \bibliographystyle{elsarticle-num} 
    %\bibliographystyle{plain}

   % \linenumbers 

    \section{Introduction}
    Understanding infection and contagion in trade networks combines several important areas of study. The economic decision-making involved in the actions of trading and investing in measures to prevent infection, the contagion dynamics on a network, and the risk of spillover from the closed network into the environment, have each been studied on their own. All three elements separately are useful for informing bits and pieces of a system wherein many stakeholders are trading products which may transmit some infection. However, without considering each of the elements together, we cannot understand the bidirectional coupling between the health and safety practices and the economic decision making. Stakeholders make decisions about health and safety measures based on infection dynamics and perceived risk, the infection dynamics are changed by these decision which are also constrained by economic factors, and the structure of the network by which the players are connected may have a great impact on this interaction. Here we present a first model which integrates all of these components.

    Much of the work on the risks that result from wildlife trade has focused either on conservation or invasion biology \cite{meeks2024wildlife, patel2015quantitative, morton2021impacts}. While invasion from exotic wildlife being traded poses a large threat to ecosystems, it is not the only threat posed by such trade. Spillover of wildlife diseases poses a large threat as well but has thus far primarily been examined either statistically (looking for patterns in observed data) \cite{mitchell2024growth, Gippert2023global, Heinse2016Risk} or theoretically at an international scale, without regard for the economic incentives and behaviors of the trade network participants. Moreover, most of the focus has been on the spillover risk specifically to humans, e.g., \cite{biggs2023governance, Johnson2020Global,Magouras2020Emerging}, rather than also considering the impact to native wildlife or local agricultural populations. This is not to say we are alone; some studies have begun to consider these questions, examining the underlying economic incentive structures for human-environmental interactions, e.g., \cite{albers2020disease}. Some very elegant work has also explored the explicit economics of the interaction between animal trade and infection, but absent the framework of a trade network (e.g., \cite{horan2015managing}).

    In addition to the work done on invasion and spillover resulting from wildlife trade, there is a large body of work regarding the game theoretic elements of trade \cite{chu2024game, zhang2013farsighted, zhao2021complex}. Understanding economic behavior, even in complex networks is not the novel element of this work, rather we propose this model as a unique connection between the economic view of trade networks and biological view of spillover risk.

    Separate from both of these things,  there is a rich literature about the passage of a contagion or infection through networks (e.g., \cite{hill2010infectious, liu2014controlling, taylor2015topological, fefferman2007disease} and many others). Investigations of infection through networks have existed since the beginning of network science and has been used for applications ranging from understanding social strategies for viral marketing campaigns \cite{hinz2011social} to predicting spatio-temporal patterns in mortgage default behavior \cite{seiler2013strategic}. However, even when the economic and biological components have been studied together, to the best of our knowledge, there has been no work on the impact that the explicit network structure has on the coupled feedback between economic decision-making and infection risk in trade networks which is a crucial piece of designing better, more resilient trade networks. 

    While there have not been many predictive causal models proposed, there has been a rich body of work exploring how to understand contagion risks in animal trade networks. The preponderance of these have focused on the trade of agricultural and/or hunted animals (e.g., \cite{cristancho2021accounting, palisson2016role}), and nearly all have estimated risks based on ecological niche for relevant species and habitat characteristics necessary for exposure risks and statistical estimation of association data.

    Each of these types of models, on its own, is not sufficient to understand bidirectional coupling between health and safety practices and economic decision making. When stakeholders are making decisions about health and safety measures based on economic incentives which change based on the perceived risk, and the perceived risk changes in response to changes in health and safety decisions, we wee a new set of rich dynamics worth investigating. The key reason to understand this coupling is to advance our capabilities in predictive modeling towards making informed decisions about spillover risk from trade networks of products vulnerable to contagion. In the present study, we build a game theoretic model which is flexible and robust and that captures the relationship between investment in health and safety practices and interactions within the trade network. The model presented here is not predictive but provides a framework for understanding these kinds of systems with more accurate insight. The key reason that this model is usable comes from a result proving the existence of a long-time stable infection probability. This insight allows us to treat a typically stochastic system deterministically and compute best responses. Examining this system by studying best responses (and thus Nash equilibria) allows us to find points in the strategy space where players, making rational decisions to maximize their own payoff, are balancing the cost of health and safety practices with the benefit it gives them in the trade network.

    In section \ref{Model}, we describe the general model and the crucial hypotheses that allow it to be usable. In section \ref{StationaryProbability}, we present the result about the existence of the directly computable stationary probability of infection and describe the process to derive it. Equipped with this result we discuss the existence of best response functions in section \ref{EquilibriumResults} and provide an easy set of sufficient conditions for the existence of such a function. In section \ref{numericalExample}, we fill in the general functions of the model with explicit functional forms that give us a toy model to examine numerically. We discuss the results from the toy model and its implications for the application areas of this model in section \ref{discussion}.

    \section{Trade Network Contagion}\label{Model}
    Consider a group of $n$ players each with a strategy $x_i\in \mathbb{R}^s$ and with a contagion status $I_i\in \{0,1\}$. Let the collection of all $n$ strategies be called $X:=[x_1,x_2,...,x_n]\in \mathbb{R}^{s\times n}$. Furthermore, let the collection of all strategies except for that of player $i$ be called $X_{-i}\in \mathbb{R}^{s\times n-1}$ and let $I$ be the vector $[I_1,I_2,...,I_n]^T$.
    The game presently discussed considers these $n$ players, which interact with one another in an acyclic trade network, passing goods downstream through the trade network at rates which depend on the strategies of all players in the network.

    Our addition to the game is to consider how the presence of a contagion, which may infect and be transmitted by the goods being traded, changes the dynamics of the game. We consider only acyclic trade networks which means that contagion can only be passed in one direction, ``downstream," through the network, after it is introduced with probability $\epsilon_i$ from the environment to player $i$. This assumes a dynamic in which infections risks are associated with the transfer of traded animals themselves, rather than by mere contact between participants. To that end, we give the players an order so that if $i$ and $j$ interact and $i$ is contagious, $j$ may become contagious if and only if $i<j$. This is equivalent to giving the vertices in the directed graph representation of the trade network a topological ordering which necessarily exists because the graph is acyclic. The probability that a focal individual $i$ interacts with player $j<i$ (upstream) at a particular time is given by $a_{i,j}(X,I_j)$. The payoff for such an interaction for the focal individual is given by  $c_{i,j}(x_i,x_j)$. The probability that a focal individual $i$ interacts with a player $j>i$ (downstream) at a particular time is given by $a_{j,i}(X,I_i)$ and the payoff for that focal individual is given as $b_{i,j}(x_i,x_j)$.  Lastly, let $d_i(x_i,I_i)$ be the component of payoff which does not depend on interactions with other players in the network. Through all of this, we get the following payoff function.
    \begin{equation}\label{payoff}
        \pi_i(x_i,X_{-i},I)= d_i(x_i,I_i)+\sum_{j=1}^{i-1}c_{i,j}(x_i,x_j)a_{i,j}(X,I_j)+\sum_{j=i+1}^nb_{i,j}(x_i,x_j)a_{j,i}(X,I_i).
    \end{equation}

    We now express these as matrices and we get $A(X,I)$, a strictly lower triangular matrix with rows $\|\vec a\|_{L^1}\leq 1$; $B$, an upper triangular matrix; and $C$ a lower triangular matrix. Further, let $F$ be component wise multiplication of $C$ and $A^T$, and let $G$ be the component wise multiplication of $B$ and $A$. Let $d=[d]_{i=1}^n$ and we get
    \begin{equation*}
        \pi(X,I)=d(X,I)+(F(X,I)+G(X,I))\mathbf{1}
    \end{equation*}
    where $\mathbf{1}$ is the vector of all 1s, $I$ is the vector of contagion states, and $X$ is the strategy profile across all players (an $n\times m$ matrix). 

    Underlying this system is a system of contagion where each player, $x_i$, with a contagion state $I_i=0$ may become infected with some probability given by $\alpha_i(x_i)$, by interacting with an infected individual, or with probability $\beta_i(x_i)$ from an environmental source. Additionally,  each player $x_j$ with contagion state $I_j=1$ may recover with a probability given by $f(x_i)$ as shown in figure \ref{infectionstate}.
    \begin{figure}[h!]
        \centering
        \begin{tikzpicture}
            \node(a)[circle, draw, inner sep =5pt] at (0,0){$0$};
            \node(b)[circle, draw, inner sep =5pt] at (2,0){$1$};
    
            \node (start) at (-1.5,0){};
    
            \draw [line width = 1.5pt, -{>>}](start)--(a);
            \draw [line width = 1.5pt, -{>}](b)to[out=120, in= 60](a);
            \draw [line width = 1.5pt, -{>}](a)to[out=-60, in= -120](b);
            \draw [line width = 1pt, -{>}](a)--(b);
    
            \node(label1) at (1,1.3){$f(\sigma)$};
            \node(label2) at (1,-1.3){Transaction with $v$ where $I(v)=1$};
            \node(label3) at (1,0.3){$\varepsilon$};
        \end{tikzpicture}
        \caption{State diagram for the binary health state of any particular player in the game.}
        \label{infectionstate}
    \end{figure}

    We seek to solve this game by finding \textit{long time Nash equilibria} which are states when every player plays a strategy which, when played for a long time, gives an average payoff that cannot be improved upon unilaterally by that player \cite{Housman2023Game}. In other words, each player is playing a best response (in a long term average sense) to their co-players.

    \section{Stationary Probabilities of infection}\label{StationaryProbability}
    To seek long time Nash equilibria, we first must find the expected value of being infected at any time. Here we present a result which says that, regardless of the initial contagion state of the system, if contagion is introduced by environmental factors at a low amount and percolated through a network downstream in the fashion described above, the probability of being infected at any given time will converge to a stationary probability vector, so long as the graph is finite. The crucial element is to note that, because we can give the vertices an ordering so that the contagion is only passed from smaller vertices to larger vertices, this graph, as a digraph of contagion, is acyclic (it can be given a topological ordering) (Fig \ref{ExampleTradeNetwork}). 

    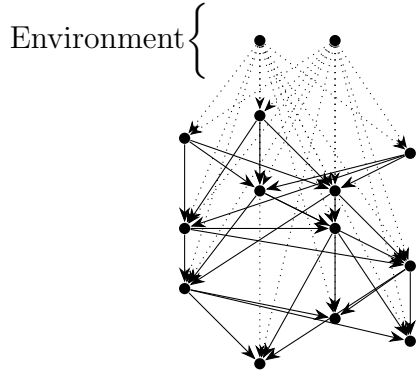
\begin{figure}[h!]
        \centering
        \begin{tikzpicture}
            \node(Env)[circle,fill,inner sep =1.5] at (1,1){};
            \node(Env2)[circle,fill,inner sep =1.5] at (2,1){};
            
		  \node(a)[circle, fill, inner sep =1.5pt] at (0,-0.3){};
            \node(b)[circle, fill, inner sep = 1.5pt] at(1,0){};
            \node(d)[circle, fill, inner sep = 1.5pt] at(3,-0.5){};

            \node(e)[circle, fill, inner sep =1.5pt] at (0,-1.5){};
            \node(f)[circle, fill, inner sep = 1.5pt] at(1,-1){};
            \node(h)[circle, fill, inner sep = 1.5pt] at(2,-1){};

            \node(i)[circle, fill, inner sep =1.5pt] at (0,-2.3){};
            \node(j)[circle, fill, inner sep = 1.5pt] at(2,-1.5){};
            \node(l)[circle, fill, inner sep = 1.5pt] at(3,-2){};

            \node(m)[circle, fill, inner sep =1.5pt] at (1,-3.3){};
            \node(n)[circle, fill, inner sep = 1.5pt] at(2,-2.7){};
            \node(p)[circle, fill, inner sep = 1.5pt] at(3,-3){};

            \draw[arrows = {-Stealth[scale=1.2]}](a)--(e);
            \draw[arrows = {-Stealth[scale=1.2]}](a)--(f);
            \draw[arrows = {-Stealth[scale=1.2]}](a)--(h);
            \draw[arrows = {-Stealth[scale=1.2]}](b)--(e);
            \draw[arrows = {-Stealth[scale=1.2]}](b)--(f);
            \draw[arrows = {-Stealth[scale=1.2]}](b)--(h);
            \draw[arrows = {-Stealth[scale=1.2]}](d)--(e);
            \draw[arrows = {-Stealth[scale=1.2]}](d)--(f);
            \draw[arrows = {-Stealth[scale=1.2]}](d)--(h);
            \draw[arrows = {-Stealth[scale=1.2]}](e)--(i);
            \draw[arrows = {-Stealth[scale=1.2]}](e)--(j);
            \draw[arrows = {-Stealth[scale=1.2]}](e)--(l);
            \draw[arrows = {-Stealth[scale=1.2]}](f)--(i);
            \draw[arrows = {-Stealth[scale=1.2]}](f)--(j);
            \draw[arrows = {-Stealth[scale=1.2]}](f)--(l);
            \draw[arrows = {-Stealth[scale=1.2]}](h)--(i);
            \draw[arrows = {-Stealth[scale=1.2]}](h)--(j);
            \draw[arrows = {-Stealth[scale=1.2]}](h)--(l);
            \draw[arrows = {-Stealth[scale=1.2]}](i)--(m);
            \draw[arrows = {-Stealth[scale=1.2]}](i)--(n);
            \draw[arrows = {-Stealth[scale=1.2]}](i)--(p);
            \draw[arrows = {-Stealth[scale=1.2]}](j)--(m);
            \draw[arrows = {-Stealth[scale=1.2]}](j)--(n);
            \draw[arrows = {-Stealth[scale=1.2]}](j)--(p);
            \draw[arrows = {-Stealth[scale=1.2]}](l)--(m);
            \draw[arrows = {-Stealth[scale=1.2]}](l)--(n);
            \draw[arrows = {-Stealth[scale=1.2]}](l)--(p);

            \draw[arrows = {-Stealth[scale=1]},dotted](Env)--(a);
            \draw[arrows = {-Stealth[scale=1 ]},dotted](Env)--(b);
            \draw[arrows = {-Stealth[scale=1 ]},dotted](Env)--(d);
            \draw[arrows = {-Stealth[scale=1 ]},dotted](Env)--(e);
            \draw[arrows = {-Stealth[scale=1 ]},dotted](Env)--(f);
            \draw[arrows = {-Stealth[scale=1 ]},dotted](Env)--(h);
            \draw[arrows = {-Stealth[scale=1 ]},dotted](Env)--(i);
            \draw[arrows = {-Stealth[scale=1 ]},dotted](Env)--(j);
            \draw[arrows = {-Stealth[scale=1 ]},dotted](Env)--(l);
            \draw[arrows = {-Stealth[scale=1 ]},dotted](Env)--(m);
            \draw[arrows = {-Stealth[scale=1 ]},dotted](Env)--(n);
            \draw[arrows = {-Stealth[scale=1 ]},dotted](Env)--(p);

            \draw[arrows = {-Stealth[scale=1]},dotted](Env2)--(a);
            \draw[arrows = {-Stealth[scale=1 ]},dotted](Env2)--(b);
            \draw[arrows = {-Stealth[scale=1 ]},dotted](Env2)--(d);
            \draw[arrows = {-Stealth[scale=1 ]},dotted](Env2)--(e);
            \draw[arrows = {-Stealth[scale=1 ]},dotted](Env2)--(f);
            \draw[arrows = {-Stealth[scale=1 ]},dotted](Env2)--(h);
            \draw[arrows = {-Stealth[scale=1 ]},dotted](Env2)--(i);
            \draw[arrows = {-Stealth[scale=1 ]},dotted](Env2)--(j);
            \draw[arrows = {-Stealth[scale=1 ]},dotted](Env2)--(l);
            \draw[arrows = {-Stealth[scale=1 ]},dotted](Env2)--(m);
            \draw[arrows = {-Stealth[scale=1 ]},dotted](Env2)--(n);
            \draw[arrows = {-Stealth[scale=1 ]},dotted](Env2)--(p);

            \node(ENVlabel) at (-1,1){Environment$\bigg\{$};;

        \end{tikzpicture}
        \caption{A connected acyclic digraph representing paths of transmission through the trade network. The environment is represented by some finite number of source terms each with possibly different different rates of infection. The weights of the digraph are represented by the weighted adjacency matrix $W$.}
        \label{ExampleTradeNetwork}
    \end{figure}

    We assume that there are $m$ environmental sources which poses a constant threat of contagion, and so we add these as vertices with indegree 0 to our digraph. Clearly adding such a vertex does not ruin the acyclic nature of the digraph. Call the new adjacency matrix $W\in [0,1]^{(n+m)\times (n+m)}$ where 
    
    $$W=
    \begin{bmatrix}
        \mathbf{O}&\mathbf{O}\\
        [\epsilon]&A
    \end{bmatrix}$$
        
    where $\mathbf{O}$ is the zero matrix, and $[\epsilon]$ is the matrix describing the weight of the edges between environmental sources and players. We can add any number of environmental contagion sources and our digraph will remain acyclic. Additionally, because it is always a source of contagion, it will always be considered infected. Let $\hat{I}\in \{0,1\}^{n+m}$ be the concatenation of the infection state of the environment and the contagion state of the players in the game. Because the environment is a constant threat of contagion we get, if $i\leq m$,
    \begin{equation*}\label{environment} 
        P(\hat{I}_i(t+1)=1)=P(\hat{I}_i(t)=1)=1.
    \end{equation*}

    At any time step, a player may become infected, may remain infected, or may recover. Let us first think of the likelihood of becoming infected. Vertices which had indegree 0 before the addition of the environmental vertices have 
    \begin{equation*}
        P(\hat{I}_i(t+1)=1|\hat{I}_i(t)=0)= \sum_{j=1}^m\beta_{i,j}(x_i)w_{i,j}P(\hat{I}_j(t)=1)=\sum_{j=1}^m\beta_{i,j}(x_i)w_{i,j}. 
    \end{equation*}
    where $\beta_{i,j}$ is a function describing how likely it is for player $i$ to take up the contagion from the environmental source $j$, given $i$'s current strategy. This may be different for every player. 

    Now the downstream effects of interaction complicate the equations further down the network. Infection can still happen randomly from the environment but it also can happen through interacting with an infected player. The probability of interacting with an infected individual can be understood from $a_{i,j}(X,1)$. This gives us the resulting probability of infection

    \begin{equation}\label{distrbutors}
        P(\hat{I}_i(t+1)=1|\hat{I}_i(t)=0)= \sum_{j=1}^m\beta_{i,j}(x_i)w_{i,j} +\sum_{j=m+1}^{i-1}\alpha(x_i)w_{i,j}(X,1)P(\hat{I}_j(t)=1)
    \end{equation}

    where $\alpha_i(x_i)$ is the probability that $i$ becomes infected given that an interaction with an infected individual takes place. Notice here that we are assuming that the likelihood of being infected by multiple interactions in a single time step is negligible. Also notice that we must assume that $i$'s interaction with $j$ does not depend on the infection status of any upstream player other than $j$. 

    Now, let us consider the probability of recovery. For every (non-environmental) player it is the same 
    \begin{equation*}\label{Recovery}
        P(\hat{I}_i(t+1)=0|\hat{I}_i(t)=1)=f_i(x_i)
    \end{equation*}
    If we make the same assumption about small enough time steps so that no stakeholder can become infected and recover in the same time step (and the other way around), this gives us that 
    \begin{equation*}\label{Recovery2}
        P(\hat{I}_i(t+1)=1|\hat{I}_i(t)=1)=(1-f_i(x_i)).
    \end{equation*}
    Again, note that the environmental nodes do not recover.

    Notice that now we can compute for $i=1$ to $n+m$
    \begin{equation}\label{nonlinear}
        \begin{split} 
            P(\hat{I}_i(t+1)=1)&= P(\hat{I}_i(t+1)=1|\hat{I}_i(t)=0)+\\
            &\quad P(\hat{I}_i(t+1)=1|\hat{I}_i(t)=1) P(\hat{I}_i(t)=1)-\\
            &\quad P(\hat{I}_i(t+1)=1|\hat{I}_i(t)=0) P(\hat{I}_i(t)=1)
        \end{split} 
    \end{equation}
    With this temporal dynamic, it will be helpful to define the following  
    \begin{equation*}\label{LinearT1}
        S:=
        \begin{bmatrix}
            \mathbf{O}&\mathbf{O}\\
            [\mathbf{Env(X)}]& diag(\mathbf{\alpha(X)})A(X,\mathbf{1})
        \end{bmatrix}\in \mathbb{R}^{n+m\times n+m}
    \end{equation*}
    \begin{equation*}\label{LinearT2}
        R:=
        \begin{bmatrix}
            I_m&\mathbf{0}&\mathbf{0}&...&\mathbf{0}\\
            \mathbf{0}^T&1-f(\sigma_1)&0&...&0\\
            \mathbf{0}^T&0&1-f(\sigma_2)&...&0\\
            \vdots&\vdots&\vdots&&\vdots\\
            \mathbf{0}^T&0&0&...&1-f(\sigma_n)
        \end{bmatrix}\in \mathbb{R}^{n+m\times n+m}
    \end{equation*}

    where $\mathbf{O}$ is the zero matrix, $\mathbf{0}$ is the zero vector, $[\mathbf{Env(X)}]$ is the component wise multiplication $[[\beta_{i,j}(x_i)]]*[\epsilon]$, and $diag(\alpha(X))A(X,1)$ is the matrix product of the matrix with $\alpha_i(x_i)$ on its diagonal with the matrix $[[a_{i,j}(X,1)]]$. Note that this product is strictly lower triangular. 

    $S$ can be thought of as a weighted adjacency matrix for the acyclic digraph with the environmental source, weighted by the likelihood of transmission.  Let $p_t = P(I_i(t)=1)$ and notice that the first term in \eqref{nonlinear} can be expressed as $Sp_t$, the second term can be expressed as $Rp_t$, and the final term can be expressed as $Sp_t*p_t$, where $(*)$ is component wise multiplication. This gives us the nonlinear difference equation  
    \begin{equation}\label{nonlinear2}
        p_{t+1}=(S+R)p_t-(Sp_t)*p_t.
    \end{equation} 
    Under the assumption that $\alpha(x_i)<1$ and $\beta(x_i)$ is sufficiently small, we can work towards some nice convergence results. 

    Let our transformation be called $Tp:=(S+R)p+(Sp)*p$. It is clear to see that, by construction, $T:\{\mathbf{1}\}\times[0,1]^n\rightarrow \{\mathbf{1}\}\times[0,1]^n$ where $\mathbf{1}$ is the vector of all $1$s representing the infection state of the environmental nodes. Finding asymptotic infection probabilities across all the vertices will allow us to compute best responses directly, with the assumption that the system runs for a sufficiently long time without collapse during the transient dynamics.

    Although the ``next generation" operator $T$ is not linear, it is lower triangular which allows us to directly compute its fixed point. Moreover, regardless of our initial choice of $p$, applying the next generation matrix repeatedly will necessarily result in a sequence which converges to the fixed point so long as the acyclic digraph is finite. Moreover, that limit is certainly the unique fixed point of $T$ and is exactly computable through forward substitution. In the theorem the indexing is changed for clarity instead of there being $m$ environmental nodes and $n$ stakeholders, there are $m$ environmental nodes and $n-m$ stakeholders for a total of $n$ nodes.  

    \begin{theorem}\label{StationaryProbabilityThm}
        Suppose $G$ is an acyclic weighted digraph of order $n$, where only $m$ vertices have indegree 0, with weighted adjacency matrix $S\in [0,1]^{n\times n}$ with each row satisfying $\|\vec{s}\|_{L^1}\leq 1$. Further suppose $R\in [0,1]^{n\times n}$, is a diagonal matrix with exactly the first $m$ entries equal to 1. Let the nonlinear operator $T:\{1\}^m\times[0,1]^{n-m}\rightarrow \{1\}^m\times[0,1]^{n-m}$ be defined as $Tp:= Sp+Rp-Sp*p$ (Where $*$ is component wise multiplication). Under these conditions, the sequence $(T^kp)_{n=0}^\infty$ converges to 
        \begin{equation}\label{WTNMainThmEq1}
            [p^\star]_j = \begin{cases}
                1&j\leq m\\
                \frac{\sum_{i=1}^{j-1}s_{i,j}p^*_i}{1-r_{j,j}+\sum_{i=1}^{j-1}s_{i,j}p^*_i}&j>m
            \end{cases}
        \end{equation}
        for any $p\in \{1\}^m\times [0,1]^{n-m}$.
    \end{theorem}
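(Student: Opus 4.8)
The plan is to exploit the strictly lower-triangular shape of $S$ to unfold the problem along a topological ordering of $G$, reducing everything to a coordinatewise, non-autonomous scalar recursion. As a preliminary step I would verify that $T$ maps $\{1\}^m\times[0,1]^{n-m}$ into itself, so that iterating it is meaningful: for $j\le m$ the $j$-th row of $S$ vanishes (indegree $0$) and $r_{j,j}=1$, so $(Tp)_j=p_j=1$; for $j>m$, writing $(Tp)_j=(Sp)_j(1-p_j)+r_{j,j}p_j$ and using $0\le(Sp)_j\le\|\vec{s}_j\|_{L^1}\le1$ together with $r_{j,j},p_j\in[0,1]$ shows $(Tp)_j\in[0,1]$.

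Next I would pin down the fixed point by forward substitution. A point $p^\star$ is fixed iff $p^\star_j=1$ for $j\le m$ and $p^\star_j\bigl(1-r_{j,j}+(Sp^\star)_j\bigr)=(Sp^\star)_j$ for $j>m$. Since $S$ is strictly lower triangular, $(Sp^\star)_j$ depends only on $p^\star_1,\dots,p^\star_{j-1}$; since $r_{j,j}<1$ for $j>m$, the coefficient $1-r_{j,j}+(Sp^\star)_j$ is strictly positive, so the equation has the unique solution recorded in \eqref{WTNMainThmEq1}. An easy induction shows each such value has the form $x/(c+x)$ with $c>0$, $x\ge0$, hence lies in $[0,1)$; thus $p^\star$ is a bona fide element of the domain and is the only fixed point.

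The heart of the matter is convergence, which I would establish by induction along the topological order. The base case $j\le m$ is the constant sequence $1$. For the inductive step, fix $j>m$, assume $(T^kp)_i\to p^\star_i$ for every $i<j$, and set $q_k:=(T^kp)_j$ and $u_k:=(ST^kp)_j$; by the inductive hypothesis $u_k\to u^\star:=(Sp^\star)_j$. The $j$-th coordinate then evolves by the scalar non-autonomous affine rule $q_{k+1}=u_k+(r_{j,j}-u_k)q_k$, and subtracting off the would-be limit $p^\star_j=u^\star/(1-r_{j,j}+u^\star)$ yields, with $e_k:=q_k-p^\star_j$,
\begin{equation*}
e_{k+1}=(r_{j,j}-u_k)\,e_k+(u_k-u^\star)(1-p^\star_j).
\end{equation*}
Because $r_{j,j}\in[0,1)$ and $u_k\to u^\star$, the multiplier converges to $r_{j,j}-u^\star$, which lies strictly inside $(-1,1)$ as long as one is not in the degenerate case $u^\star=1$, $r_{j,j}=0$; under the model's standing assumptions (each non-source row of $S$ has $L^1$ norm strictly below $1$) that case is excluded, so $|r_{j,j}-u_k|\le\mu<1$ for all large $k$ while the forcing term $(u_k-u^\star)(1-p^\star_j)$ tends to $0$. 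The elementary estimate that $|e_{k+1}|\le\mu|e_k|+\delta_k$ with $\mu<1$ and $\delta_k\to0$ forces $e_k\to0$ then completes the step, and running $j$ up to $n$ finishes the proof.

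I expect the only genuine obstacle to be this convergence step — specifically, organising the induction along the topological order and, crucially, certifying that the per-iteration multiplier $r_{j,j}-u_k$ is eventually a strict contraction. This is where the hypotheses must be used with care: if a non-source node can be fed entirely by environmental sources of total weight $1$ and has no recovery ($r_{j,j}=0$), the coordinate obeys $q_{k+1}=1-q_k$ and need not converge even though \eqref{WTNMainThmEq1} still returns $1/2$, so some strict strengthening of $\|\vec{s}\|_{L^1}\le1$ (or of the diagonal of $R$) on the non-source rows is what makes the claim go through. Everything else — invariance of the domain, and the forward-substitution derivation and well-definedness of \eqref{WTNMainThmEq1}, the latter hinging on $r_{j,j}=1$ holding only at the $m$ sources — is routine once the triangular structure is in hand.
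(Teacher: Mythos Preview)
Your proposal is correct and follows essentially the same architecture as the paper: induction along the topological order, with each coordinate reduced to a scalar non-autonomous affine recursion whose coefficients converge by the inductive hypothesis. The paper packages the convergence of that scalar recursion into a standalone appendix lemma (proved by a three-case squeeze argument on the sign of the limiting multiplier), whereas you subtract off the candidate limit and work directly with the error recursion $e_{k+1}=(r_{j,j}-u_k)e_k+(u_k-u^\star)(1-p^\star_j)$; your route is shorter and avoids the case split. You are also more explicit than the paper on two points it treats lightly: the invariance $T(\{1\}^m\times[0,1]^{n-m})\subseteq\{1\}^m\times[0,1]^{n-m}$, and the degenerate boundary case $r_{j,j}=0$, $u^\star=1$ where the multiplier hits $-1$ and convergence can genuinely fail---the paper's claim that $|r_{j,j}-\sum s_{i,j}p_i|<1$ quietly relies on the same strictness assumption you flag.
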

    \begin{proof}  
        The digraph is acyclic so give the vertices a topological ordering which ensures that the vertices with indegree 0 are labeled $1,2,..., m$, the weighted adjacency matrix $S$ is strictly lower triangular, and $R$ has $r_{i,i}=1\iff i\leq m$. Let $p\in\{1\}^m\times[0,1]^{n-m}$ so $[p]_i=1$ for $i\leq m$. Note the following two points: First, $\sum_{i=1}^ms_{i,j}p_i\leq 1$ for all $j$ so $|r_{j,j}-\sum_{i=1}^ms_{i,j}p_i|<1$ for all $j>m$.  Secondly, the convergence $[T^{(k)}p]_i\rightarrow p^*_i$ is trivial for $i\leq m$. 
            
        Observe that $[Tp]_{m+1}=\sum_{i=1}^{m}s_{i,m+1}+(r_{m+1,m+1}-\sum_{i=1}^{m}s_{i,m+1})p_{m+1}$. This is because $S$ strictly lower triangular. Moreover we see easily that applying the transformation again gives us the recurrence relation
        \begin{equation*}
            [T^{(k)}p]_{m+1} = \sum_{i=1}^ms_{i,m+1}+(r_{m+1,m+1}-\sum_{i=1}^m s_{i,m+1})[T^{(k-1)}p]_{m+1}
        \end{equation*}    
        This sequence has the general form 
        \begin{equation*}
            [T^{(k)}p]_{m+1} = p_{m+1}\left(r_{m+1,m+1}-\sum_{i=1}^ms_{i,m+1}\right)^k + \frac{1-(\sum_{i=1}^ms_{i,m+1})^k}{1-r_{m+1,m+1}+\sum_{i=1}^ms_{i,m+1}}
        \end{equation*}
        Thus, because $|r_{m+1,m+1}-\sum_{i=1}^m s_{i,m+1}|<1$,  for any $\varepsilon_1>0$, $\exists K_1$ such that for $k>K_1$, $|[T^{(k)}p]_{m+1}-p^*_{m+1}|<\varepsilon_1$, that is $[T^{(k)}p]_{m+1}\rightarrow p^*_{m+1}$.

        Now for the inductive step we will use the elementary analysis lemma (Appendix). Let $l>m+1$  and notice that 
        \begin{equation*}
            [T^{(k+1)}p]_l=\sum_{i=1}^{l-1}s_{i,l}[T^{(k)}p]_i+\left(r_{l,l}-\sum_{i=1}^{l-1}s_{i,l}[T^{(k)}p]_i\right)[T^{(k)}p]_l
        \end{equation*}
        Let $y_k:=[T^{(k+1)}p]_l$ and let $[T^{(k)}p]_i$ for $i=1,...,l-1$ together form the vector $\vec x_k$. Further let,  $a(\vec{x})=\sum_{i=1}^{l-1}s_{i,l}\vec{x}_i$ and $b(\vec x)= r_{l,l}-\sum_{i=1}^{l-1}s_{i,l}\vec{x}_i$. $a,b\in C^0(\mathbb{R}^{l-1},\mathbb{R})$. 
        Thus, if we rewrite the above equation as 
        \begin{equation*}
            y_k =a(\vec x_{k-1})+b(\vec x_{k-1})y_{k-1}
        \end{equation*}
        and note that $\lim \vec x_{k} = \vec x:= [p_i^*]_{i=1}^{l-1}$, the lemma says that $y_k\rightarrow \frac{a(x)}{1-b(x)}$ as $k\rightarrow \infty$. Translating back, this means that
        \begin{equation*}
            \lim _{k\rightarrow \infty} [T^{(k)}p]_l=\frac{\sum_{i=1}^{l-1}s_{i,l}p^*_i}{1-r_{l,l}+\sum_{i=1}^{l-1}s_{i,l}p^*_i}=p^*_l
        \end{equation*}
    
        This forms the basis for our inductive argument. Because $[T^{(k)}p]_i=1$ for all $k$ when $i\leq m$ and because $[T^{(k)}p]_{m+1}\rightarrow p^*_{m+1}$, the inductive step tells us that $[T^{(k)}p]_{m+2}\rightarrow p^*_{m+2}$. We may repeat this a \textit{finite} number of times to see that $[T^{(k)}p]_i\rightarrow p^*_i$ for all $i=1,...,n$ which we write as $\lim_{k\rightarrow \infty}T^{(k)}p=p^*$

        This works regardless of our initial choice for $p$ and so we have shown the desired result.       
    \end{proof}

    This result allows us to compute exact expected values for $\pi(X,I)$ and easily approximate (or, with great effort, compute exactly) how payoffs change with respect to the components of the strategy. In this way, we can find best responses. This provides us a direct mapping from the set of global parameters and vertex specific parameters to payoffs $w:\mathbb{R}^{n\cdot m}\rightarrow \mathbb{R}^n$
    \begin{equation}\label{convexPayoff}
        w(X)= g(X,P^*(X))
    \end{equation} 
    More specifically we can write 
    \begin{equation}\label{DeterministicPayoff}
        \begin{split}
            w_i(x_i,X_{-i})&=p^*_id_i(x_i,1)+ (1-p^*_i)d_i(x_i,0)\\
            &+\sum_{j=1}^{i-1}c_{i,j}(x_i,x_j)(p^*_ja_{i,j}(X,1)+(1-p^*_j)a_{i,j}(X,0))\\
            &+\sum_{j=i+1}^nb_{i,j}(x_i,x_j)(p^*_ia_{j,i}(X,1)+(1-p^*_i)a_{j,i}(X,0))
        \end{split}
    \end{equation}
    This means that we can carry out sensitivity analyses on the global parameters and the graphical structure and easily measure things like long time infection probabilities across the whole network.

    \section{Equilibrium Results}\label{EquilibriumResults}
    Equipped with the result from theorem \ref{StationaryProbabilityThm}, we can turn our attention to the analysis of this system. In order to find Nash equilibria, we must discuss best responses. In particular, we seek a function $\psi_i:\mathbb{R}^{s\cdot (n-1)}\rightarrow \mathbb{R}^s$ which, given a strategy profile $X_{-i}$ returns player $i$'s best response. If such a function exists, we can go further to find a function $\Psi(X)=[\psi_i(X_{-i})]_{i=1}^n$ and say that surely fixed points of $\Psi$ are Nash equilibria of the game (This is direct from the definition of Nash equilibrium).

    The particular form of $\Psi$ depends a great deal on the forms of the functions mentioned in sections \ref{Model} and \ref{StationaryProbability}.  For this reason, we do not attempt to find general results about $\Psi$ but we do present an argument for the existence of such a $\Psi$ under certain Hypotheses. 
        \subsection{Existence of a Best Response Function}
        We would like to discuss the existence of $x_i$ which maximizes $w_i(x_i,X_{-i})$. In the case that $X\in \Omega\subset \mathbb{R}^{x\times n}$ where $\Omega$ is compact, then continuity of $w$ is sufficient for the existence of at least one maximizer of $w_i(x_i,X_{-i})$ though a ``continuous on compact" argument.
    
        In order to show the existence of a maximizer in the case where strategies are unbounded, we need some restrictions on the behavior of the functions $b, c,$ and $d$, the functions which describe the upstream downstream and intrinsic payoff for a player. We will do this by considering the typical economic assumptions of \textit{increasing marginal costs} and \textit{decreasing marginal benefit}. Most simply, for our function $d$ we will require that $d_x(x_i,1)$ and $d_x(x_i,0)$ are both eventually decreasing in every component of $x_i$. Recall $d$ is the function which describes the component of payoff which is independent of other players. If we consider $x_i$ as different components of investment into different elements relating the game, then the laws of increasing marginal cost and decreasing marginal benefit will impose the assumption that eventually, increasing any investment will decrease payoff. 
    
        Recall that $c_{i,j}(x_i,x_j)$ governs the payoff to player $i$ from upstream interaction (i.e. purchasing). We can assume that $c_{i,j}$ is everywhere negative. Likewise recall that $b_{i,j}$ governs the payoff to player $i$ from downstream interaction (i.e. selling). We assume that $b_{i,j}$ is everywhere positive. A simple way we can ensure the existence of a best response is by assuming that both $b_{i,j}(x_i,x_j)$ and $c_{i,j}(x_i,x_j)$ are bounded in the first argument for all $i,j$. The assumption can be justified through the finiteness of resources available to stakeholders. Surely each player has a limit to how much they can buy from players upstream and how much they can sell to players downstream. These two assumptions together are the basis for Hypotheses 2 and 3 in the following theorem, forming a loose set of reasonable sufficient conditions for the existence of a best response function.
    
        Under these assumptions, with a strategy space equipped with tie-breaking order, there then exists a function $\psi_i:\mathbb{R}^{s\times(n-1)}\rightarrow \mathbb{R}^s$. Such that $\psi_i(X_{-i})$ is $i$'s best response to $X_{-i}$. Let $\Psi:\mathbb{R}^{s\times n}\rightarrow \mathbb{R}^{s\times n}$ be the concatenation of each of these functions so that $\Psi(X)=[\psi_i(X_{-i})]_{i=1}^n$. This is written formally in proposition \ref{BestResponseExistence}. It is immediate from the definition of Nash equilibrium that a fixed point of $\Psi$ is a Nash Equilibrium of the game. For use in modeling, finding fixed points of $\Psi$ is crucial and can be done numerically. 
    
        \begin{theorem} \label{BestResponseExistence}
            Let $\Omega\subseteq \mathbb{R}^{s\times n}$ be a strategy space for the game defined by the payoff equation \eqref{payoff}, equipped with a tie breaking ordering, $\prec_\Omega$. Under the following hypotheses
            \begin{enumerate}
                \item[H1)] $a_{i,j},b_{i,j},c_{i,j}, \beta_{i,j}$, $d_i$, $f_i,$ and $ \alpha_i$, as defined above, are all continuous in $x_i$ for all $i,j=1,2,...,n$
                \item[H2)] There exits $R$ such that $|x_i|>R\implies d_i(x_i,I_i)\leq M_0-\epsilon|x_i|$ for some $M$ and some $\epsilon$. 
                \item[H3)] $|c_{i,j}|$ and $|b_{i,j}|$ are bounded by $M$
            \end{enumerate}the function $\Psi:\Omega\rightarrow \Omega$ which satisfies the condition that $\Psi(X)$ is the concatenation of every players' best response to $X$ is well defined.  Moreover, if $\Omega$ is compact, then (H2) and (H3) are unnecessary. 
        \end{theorem}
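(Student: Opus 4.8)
The plan is to reduce the statement to the single‑player problem: for each $i$ and each fixed $X_{-i}$, show that the deterministic payoff $x_i\mapsto w_i(x_i,X_{-i})$ of \eqref{DeterministicPayoff} attains a maximum over player $i$'s feasible responses, then pick one maximizer with the tie‑breaking order and concatenate. The first ingredient is continuity of $w_i(\cdot,X_{-i})$. By Theorem \ref{StationaryProbabilityThm} the stationary vector $p^\star(X)$ is obtained by forward substitution, each coordinate $p^\star_j$ being the quotient in \eqref{WTNMainThmEq1} whose numerator and denominator are polynomial in the previously‑computed $p^\star_i$, $i<j$, and in the entries of $S(X)$ and $R(X)$, with the denominator $1-r_{j,j}+\sum_{i<j}s_{i,j}p^\star_i$ bounded away from $0$ (this is exactly the inequality $|r_{j,j}-\sum_{i<j}s_{i,j}p^\star_i|<1$ used inside that proof). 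Since $S$ and $R$ are assembled from $\alpha_i,\beta_{i,j},a_{i,j},f_i$, all continuous in $X$ by (H1), an induction along the topological order gives that $X\mapsto p^\star(X)$ is continuous; hence $w_i(\cdot,X_{-i})$ is continuous, being a finite sum of products of the continuous functions $d_i,b_{i,j},c_{i,j},a_{i,j}$ and the continuous $p^\star_k$. With this in hand the compact case is immediate: the feasible set $\{x_i:(x_i,X_{-i})\in\Omega\}$ is compact, and a continuous function on a nonempty compact set attains its maximum, using neither (H2) nor (H3).

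For the unbounded case I would prove coercivity of $w_i(\cdot,X_{-i})$, i.e.\ $w_i(x_i,X_{-i})\to-\infty$ as $|x_i|\to\infty$. The two interaction sums in \eqref{DeterministicPayoff} each have at most $n-1$ terms, every term being a $b$‑ or $c$‑value times a convex combination of two entries of $A$; the latter factor lies in $[0,1]$ because the entries of $A$ are probabilities (nonnegative rows of $L^1$‑norm $\le 1$) and the weights $p^\star_k,1-p^\star_k$ lie in $[0,1]$, while the former has absolute value at most $M$ by (H3). So the two sums together are bounded in absolute value by $(n-1)M$, uniformly in $x_i$. The intrinsic term $p^\star_i d_i(x_i,1)+(1-p^\star_i)d_i(x_i,0)$ is a convex combination of $d_i(x_i,1)$ and $d_i(x_i,0)$, each $\le M_0-\epsilon|x_i|$ once $|x_i|>R$ by (H2), hence is itself $\le M_0-\epsilon|x_i|$ there. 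Combining, $w_i(x_i,X_{-i})\le M_0+(n-1)M-\epsilon|x_i|$ for $|x_i|>R$, which diverges to $-\infty$. Fixing any feasible reference point $x_i^0$, there is a radius $R'$ beyond which $w_i$ lies strictly below $w_i(x_i^0,X_{-i})$; thus the supremum over the (closed) feasible set equals the supremum over its intersection with $\overline{B(0,R')}$, a nonempty compact set, on which the continuous $w_i$ attains its maximum, so a global maximizer exists.

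It remains to make the selection and assemble $\Psi$. For fixed $X_{-i}$ the maximizer set $\arg\max_{x_i}w_i(x_i,X_{-i})$ is nonempty (just shown) and compact (a closed level set contained in $\overline{B(0,R')}$); invoking the tie‑breaking order $\prec_\Omega$, which we take to admit a least element on each such set (a lexicographic order does), define $\psi_i(X_{-i})$ to be that $\prec_\Omega$‑least maximizer, a well‑defined map $\psi_i:\mathbb{R}^{s\times(n-1)}\to\mathbb{R}^s$ whose value is by construction a feasible response for $i$. Since the strategy space of a simultaneous‑move game factors as $\Omega=\prod_{i=1}^n\Omega_i$, the concatenation $\Psi(X):=[\psi_i(X_{-i})]_{i=1}^n$ lands back in $\Omega$ and is well defined, as claimed; and any fixed point of $\Psi$ is a Nash equilibrium directly from the definition.

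The step I expect to be the main obstacle is the continuity (indeed smoothness away from degeneracies) of $X\mapsto p^\star(X)$: one must ensure the denominators in \eqref{WTNMainThmEq1} never vanish along the forward substitution, which is precisely where the row‑sum bound $\|\vec s\|_{L^1}\le 1$ and the structure of $R$ enter, so that Theorem \ref{StationaryProbabilityThm} really defines a continuous function of the whole profile. Everything downstream of that is a routine coercivity‑plus‑compactness argument; the only modelling conventions quietly used are that $\Omega$ is closed and has product structure, and that the tie‑breaking order realizes minima on compact sets.
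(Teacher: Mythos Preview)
Your proposal is correct and follows essentially the same route as the paper's proof: continuity of $p^\star$ (hence of $w_i$) via forward substitution and (H1), the continuous-on-compact argument when $\Omega$ is compact, and in the unbounded case a coercivity estimate using (H2)--(H3) to trap the supremum inside a closed ball, followed by tie-breaking and concatenation. The only cosmetic difference is that the paper compares $\pi_i$ against the reference value $m:=\min_{I}\pi_i(\mathbf{0},X_{-i},I)$ rather than writing out the explicit linear bound $M_0+(n-1)M-\epsilon|x_i|$, but this is the same mechanism.
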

    
        \begin{proof}
            let $\Omega = \Omega_{i}\times \Omega_{-i}$ where for any player $i$, $x_i\in \Omega_{i}$ and $X_{-i}\in \Omega_{-i}$. Payoff for each player $i$ is determined by $w_i(x_i,X_{-i})$ as in \eqref{DeterministicPayoff}
            For any particular $X_{-i}$ we seek to show that that the maximum of $w_i(x_i,X_{-i})$ over $x_i\in\Omega_{i}$ is attained.
    
            When $\Omega\subset\subset \mathbb{R}^{s\times n}$ then we need only show that $w_i$ is continuous. From (H1) it is obvious that $\pi_i(x_i,X_{-i},1)$ is the sum of continuous functions and so it itself continuous. Moreover, $p^\star_i(x_i,X_{-i})$ for any player $i$ is given as \eqref{WTNMainThmEq1}. Again, it is clear that from (H1) that $s_{i,j}$ is continuous for all $i,j=m+1,m+2,...,m+n$ (corresponding to all the players). If $f$ is continuous then so is $r_{i,i}$ and, as noted in Theorem \ref{StationaryProbabilityThm} the denominator of \eqref{WTNMainThmEq1} is always strictly positive. Therefore the quotient of these two continuous functions is continuous and thus so is $p^\star_i$. With the continuity having been determined, it is direct from the compactness of $\Omega$ that there is an $x^\star_i\in \Omega_{i}$ such that $w_i(x^\star_i,X_{-i})=\max_{\Omega_{i}}w_i(\cdot, x_{-i})$. The fact that the argmax of $w_i(\cdot, X_{-i})$ is nonempty means that the we certainly have a best response by way of the tie-breaking order$\prec_\Omega$.  Clearly a tie-breaking order is certain to exist so long as the strategy space has finite dimension.
    
            In the case that $\Omega$ is not compact, we put some decay estimates on $w_i$ so that we know the maximum is attained. Notice that $w_i$ is a convex combination of terms from $\pi_i(x_i,X_{-i},1)$ and $\pi_i(x_i,X_{-i},0)$ and so, obviously,

            $$\min_{I\in \{0,1\}^n}\pi_i(x_i,X_{-i},I)\leq w_i(x_i,X_{-i})\leq \max_{I\in \{0,1\}^n}\pi_i(x_i,X_{-i},I).$$ 
            
            By (H3) it is easy to see that $\pi_i(x_i,X_{-i},I)\leq d_i(x_i,I_i)+Mn$ for any $I\in \{0,1\}^n$. Let $m:=\min_{I\in \{0,1\}^n}\pi_i(\mathbf{0},X_{-i},I).$ Let $R_1=\max\{R, \frac{-m +M_0+Mn}{\epsilon}\}$. When $|x_i|> R_1$, then $d_i(x_i,I)<m-Mn$ and thus $\pi(x_i,X_{-i},I)<m$, for any $I$. This implies that 
            $$\max_{x_i\in\Omega\setminus B_{R_0}(0)}\pi_i(x_i,X_{-i},I)<m\leq \max_{x_i\in\overline{B_{R_0}(0)}}\pi_i(x_i,X_{-i},I)$$
            for any I. This means that outside of $B_{R_0}(0)$, $w_i(x_i,X_{-i}) < m $ and, because we now have that $m\leq \sup_{\Omega} w_i(x_i,X_{-i})$,
            we can say that $\sup_\Omega w_i(x_i,X_{-i})=\max_{\overline{B_{R_0}(0)}}w_i(x_i,X_{-i}),$ which surely exists by the same continuity on compact argument. Now that we know that the maximum of $w(x_i,X_{-i})$ is attained on $\Omega$, then surely we know that, given the tie breaking order $\prec_{\Omega_i}$, every $X_{-i}$ corresponds to one best response $x_i$. 
    
            Thus we have shown that for any $X_{-i}$, whether $\Omega$ is compact or not, there exists a best response. Let $\psi_i:\mathbb{R}^{s\times(n-1)}\rightarrow \mathbb{R}^s$ be the function which maps $X_{-i}$ to its best response. This is well defined (although we have no hope of describing its behavior or regularity). Easily concatenate these functions for all $i$ to get $\Psi:\mathbb{R}^{s\times n}\rightarrow \mathbb{R}^{s\times n}$ where $\Psi(X)=[\psi_i(X_{-i})]_{i=1}^n$. This function is well defined, and maps a strategy profile $X$ to the strategy profile that where each player is playing a best response to $X$.  
        \end{proof}
    
        Because we have only shown that the best response function is well defined, but have done nothing to describe its behavior or regularity, we can do very little to determine the existence of a fixed point much less the stability of such a fixed point. However, it does mean that the process of myopic best response is well defined and that there is, perhaps, a process by which fixed points may be found numerically.  In $\Omega$ we may solve the problem $X-\Psi(X)=0$. Without regularity of $\Psi$, or indeed the a priori knowledge that a solution exists, we are not guaranteed  through this solution concept, but we can use myopic best response to examine the evolution of strategies in time.  

    \section{A Numerical Example through Myopic Best response}\label{numericalExample}
    Here we present a toy example to demonstrate that this method of considering trade network games can be used to reveal surprising results about how the structure of a trade network may impact the strategic equilibria. Consider a game with $N$ players organized in an acyclic digraph with a strictly lower triangular adjacency matrix $W$ where each row sums to 1. This is a trade network where each edge describes an ``upstream" transaction (i.e. a purchase).  As before suppose each player as a one dimensional strategy $x_i\in [0,1]$ and an infection state $I_i\in \{0,1\}$. In this toy example, "strategy" is mildly modeled after ``investment into health and safety measures" but is far too simple to capture that fully. 

    Suppose that the likelihood of transaction $a_{ij}(X,I_j)=w_{i,j}(1-x_iI_j)$ so as player $i$'s strategy increases, their likelihood to interact with an infected player decreases. Recall that total payoff can be separated into three components: The benefit to player $i$ from interacting upstream (purchasing), the benefit to player $i$ from interacting downstream (selling), and the intrinsic benefit to player $i$. Consider the game where each of these components for player $i$ interacting with a player $j$  are defined as follows
    \begin{equation*}
        \begin{split}
            \text{downstream:}\quad\quad & b(x_i,x_j)= (1-x_i)\\
            \text{upstream:}\quad\quad & c(x_i,x_j)=(x_j-1)\\
            \text{intrinsic:} \quad\quad & d(x_i,I_i)=x_i(x_i-1)(1-I_i) 
        \end{split}
    \end{equation*}
    Notice that the upstream and downstream components satisfy the relation $c(x_i,x_j)=-b(x_j,x_i)$. With each of these components, we can compute the total payoff as
    \begin{equation}\label{EXpayoff}
        \begin{split}
            \pi(x_i,x_{-i},I)=x_i&(x_i-1)(1-I_i)+\sum_{j=1}^{i-1}(x_j-1)w_{i,j}(1-x_iI_j)\\
            &+\sum_{j=i+1}^N(1-x_j)w_{j,i}(1-x_jI_i)
        \end{split}
    \end{equation}

    We also make the decision that $\alpha_{i,j}(x_i)=\beta_{i,j}(x_i)=(1-x_i)$ and that the probability of recovery is exactly $f(x_i)=x_i$ With these assumptions, we can write down our process for computing $p^*$ as described in Theorem \ref{StationaryProbabilityThm}.

    \begin{equation*}
        p^*_i=\begin{cases}
            1&i=0\\
            \frac{\epsilon(i-x_i)+\sum_{j=1}^{i-1}w_{i,j}(1-x_j)^2p^*_j}{x_i+\epsilon(i-x_i)+\sum_{j=1}^{i-1}w_{i,j}(1-x_j)^2p^*_j}& i>0
        \end{cases}
    \end{equation*}

    From this, we can write the long-run payoff function $w$ as 
    \begin{equation*}
        \begin{split}
            w(x_i,x_{-i}) &= x_i(x_i-1)(1-p_i^*)\\
            &+\sum_{j=1}^{i-1}(x_j-1)w_{i,j}(1-x_ip^*_j)\\
            &+\sum_{j=i+1}^n(1-x_j)w_{j,i}(1-x_jp^*_i)
        \end{split}
    \end{equation*}
    Notice that, because of the linearity of this system, we can write $w_i(x_i,x_{-i})=\pi_i(x_i,x_{-i},p^*_i)$. This is not generally true when the functional forms are non-linear. Because we now have a map $w:[0,1]^n\rightarrow \mathbb{R}^n$ which takes the strategy profile $X\mapsto w(X)$ a vector of fitnesses, we can use proposition \ref{BestResponseExistence} to show that the map $\Psi:[0,1]^n\rightarrow [0,1]^n$ which takes a strategy profile $X$ and returns a best response, $\Psi(X)$, is well defined given a certain tie-breaking order, $\prec_\Omega$. Although we cannot investigate this function analytically in this example, we can investigate it numerically. In this way we can approximate fixed points of the map $\Psi$ by repeatedly optimizing $w_i(x_i,X_{-i})$ with respect to the first variable for each $i$. This process is exactly the myopic best response process from evolutionary game theory and will terminate in a Nash equilibrium (if it terminates). Because there is no guarantee that $\Psi$ has a unique fixed point, we will run the simulation multiple times from many initial strategy profiles as a way of taking a low resolution image of the basin of stability for the solutions we find. 

        \subsection{Proof of concept on symmetric and asymmetric networks} 
        As a sanity check we show that network structure has measurable effects on the strategic equilibria of the game. Consider two networks which have the same sets of players but slightly different patterns of interaction. 
        \begin{figure}[h!]
            \centering
            \begin{tikzpicture}
                
    		  \node(a)[circle, fill, inner sep =1.5pt] at (0,0){};
                \node(b)[circle, fill, inner sep = 1.5pt] at(1,0){};
                \node(c)[circle, fill, inner sep = 1.5pt] at(0,-1){};
                \node(d)[circle, fill, inner sep =1.5pt] at (1,-1){};
                \node(e)[circle, fill, inner sep = 1.5pt] at(-1,-2){};
                \node(f)[circle, fill, inner sep = 1.5pt] at(0,-2){};
    
                \node(g)[circle, fill, inner sep =1.5pt] at (1,-2){};
                \node(h)[circle, fill, inner sep = 1.5pt] at(2,-2){};

                \draw[arrows = {-Stealth[scale=1.2]}](a)--(d);
                \draw[arrows = {-Stealth[scale=1.2]}](a)--(c);
                \draw[arrows = {-Stealth[scale=1.2]}](b)--(d);
                \draw[arrows = {-Stealth[scale=1.2]}](b)--(c);
                \draw[arrows = {-Stealth[scale=1.2]}](c)--(e);
                \draw[arrows = {-Stealth[scale=1.2]}](c)--(f);
                \draw[arrows = {-Stealth[scale=1.2]}](c)--(g);

                \draw[arrows = {-Stealth[scale=1.2]}](d)--(f);
                \draw[arrows = {-Stealth[scale=1.2]}](d)--(g);
                \draw[arrows = {-Stealth[scale=1.2]}](d)--(h);
    
                 \node(s1) at (-0.5,-1){s1};
                 \node(s2) at (1.5,-1){s2};
                 \node(a2) at (3.5, -1){a1};
                 \node(a2) at (5.5, -1){a2};
    
                \node(i)[circle, fill, inner sep =1.5pt] at (4,0){};
                \node(j)[circle, fill, inner sep = 1.5pt] at(5,0){};
                \node(k)[circle, fill, inner sep = 1.5pt] at(4,-1){};
                \node(l)[circle, fill, inner sep =1.5pt] at (5,-1){};
                \node(m)[circle, fill, inner sep = 1.5pt] at(3,-2){};
                \node(n)[circle, fill, inner sep = 1.5pt] at(4,-2){};
                \node(o)[circle, fill, inner sep =1.5pt] at (5,-2){};
                \node(p)[circle, fill, inner sep = 1.5pt] at(6,-2){};
          
                \draw[arrows = {-Stealth[scale=1.2]}](i)--(k);
                \draw[arrows = {-Stealth[scale=1.2]}](i)--(l);
                \draw[arrows = {-Stealth[scale=1.2]}](j)--(k);
                \draw[arrows = {-Stealth[scale=1.2]}](j)--(l);
                \draw[arrows = {-Stealth[scale=1.2]}](k)--(m);
                \draw[arrows = {-Stealth[scale=1.2]}](k)--(n);
                \draw[arrows = {-Stealth[scale=1.2]}](k)--(o);
                \draw[arrows = {-Stealth[scale=1.2]}](k)--(p);
                \draw[arrows = {-Stealth[scale=1.2]}](l)--(p);
            \end{tikzpicture}
            \caption{Two trade networks with 8 players. The arrows point in the direction that goods are passed through the system (which is different from the way that the adjacency matrix $W$ encodes the information). On the \textbf{left} is the symmetric case and on the \textbf{right} is the asymmetric case. The two nodes of interest in the symmetric case are labeled $s1$ and $s2$ and in the asymmetric case those same players are labeled $a1$ and $a2$.}
            \label{SymmetryProofOfConcept}
        \end{figure}
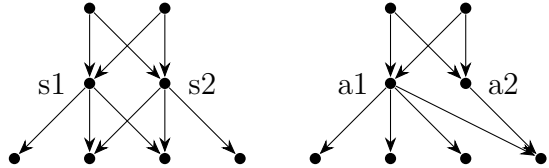
        We will use the example of two nearly identical three layer trade networks which differ only in the edges between levels two and three (Fig \ref{SymmetryProofOfConcept}). In the symmetric case there are two producers, two distributors (s1 and s2) which each buy from both producers evenly, and four consumers. One consumer buys only from s1, one buys only from s2 and the other two but from both s1 and s2 evenly. In the asymmetric case, again there are two producers, two distributors (a1 and a2) which buy from the producers evenly and four consumers. However, in this network three consumers buy only from a1 and the fourth buys evenly from a1 and a2. In the asymmetric network, a2 has only one potential customer.  
    
        \begin{table}[]
            \centering
            \begin{tabular}{c|cc|cc|}
                 &\multicolumn{2}{|c|}{Symmetric}&\multicolumn{2}{|c|}{Asymmetric}  \\
                 &s1&s2&a1&a2\\
                 \hline
                Equilibrium Strategy&0.4377&0.4377&0.3924&0.5195\\
                Infection Probability&0.1944&0.1944&0.2314&0.1391\\
                \hline
                Na\"ive Risk& \multicolumn{2}{|c|}{0.7276}&\multicolumn{2}{|c|}{0.7298}\\
                Weighted Risk&\multicolumn{2}{|c|}{0.4614}&\multicolumn{2}{|c|}{0.5365}
            \end{tabular}
            \caption{The results of the equilibrium search through myopic best response on the symmetric and asymmetric network reported for the focal individuals s1, s2, a1, and a2. In the symmetric case, the equilibrium strategies are equal while in the asymmetric case, the distributor with the greater ``market share" has a lower strategy (which again can be considered low investment in health and safety measures for the purposes of application) than the one with the smaller market share. The probabilities of infection at equilibrium are also reported for each player. In the symmetric game both players have the same probability of infection, in the asymmetric game the distributor with the greater market share has a higher probability of infection. Risk measurements for each network are reported. Regardless of which measure of risk is chosen, the asymmetric network poses a greater risk.}
            \label{SymmetricTable}
        \end{table}
    
        The results, summarized in table \ref{SymmetricTable} for the distributors, show that, as expected, the structure of the network has a reasonable impact on the results of the game. It is a helpful sanity check to confirm that symmetric conditions result in symmetric solutions. It is also important to note that the numerics seem to support the monostability of the system as the same equilibrium solutions are found in both the asymmetric and symmetric cases regardless of the initial data (plus or minus some numerical error which was observed as high as $5\times 10^{-4}$, perhaps because of the rather inelegant method of optimization used here). 
    
        The main result is that the symmetric system resulted in symmetric results and that in the asymmetric system, a1, the sole distributor to the majority of the consumers, had a lower strategy at equilibrium than a2. It was also observed that the long term probability of infection for a1 was higher than that of a2. This is discussed further in section \ref{discussion}. 
    
        In global terms we may also want to consider the total probability of infection. The na\"ive way is to consider the probability that none of the stakeholders are infected which is easily calculable. We take the ``Na\"ive risk" to be $1-\prod_{i=1}^n(1-p^*_i)$, which is the probability that at least one individual is infected. In this example the symmetric case has a na\"ive risk of 0.7276 and the asymmetric case has a na\"ive total probability of 0.7298. This approach does not take into account the idea that an infection at the top of the trade network pose a greater threat to the network as a whole.
    
        In order to discuss a better measure of global risk we introduce the idea of a weighted risk measurement. Recall that $A(X,I_j)$ is the weighted adjacency matrix describing the probability of transacting upstream given a strategy profile $X$ and an infection profile $I$. Suppose that individual $i$ is infected. The expected number of  additional infections $i$ may cause after a single transaction are given by the sum of the i$^th$ column of the matrix $S$ where $s_{i,j}=\alpha(x_i)a_{i,j}=w_{i,j}(1-x_i)^2$. Likewise the expected number of additional infections $i$ may cause \textit{indirectly} from a sequence of two transactions is given by the sum of the $i^{th}$ column of $S^2$. We can repeat this process for increasing path lengths until eventually $S^k=\mathbf{O}$ (which is certain to happen because $S$ is strictly lower triangular and thus nilpotent of degree $\leq n$.) By adding each of these expected number of additional infections we get an estimation of $\mathcal{R}_0^i$ for each player $i$.
        
        $$\mathcal{R}_0^i=\sum_{k=1}^n \mathbf{1}^TS^n\hat{e}_i$$ 
        
        This quantity represents the total number of expected new infections $i$ may cause should $i$ itself become infected. This gives us our weighted risk measurement which we call $\mathcal{R}_g$ which we use as a measure of spillover risk from the network. Although more sophisticated measures of spillover risk exist, for our purposes this weighted risk measurement gives an appropriate estimation of the number of new infections appearing in the long run equilibrium state. If we assume each infection behaves as an independent opportunity to yield a spillover event, then this equilibrium state will be approximately proportional to the number of spillover cases.   
        
        $$\mathcal{R}_g=\sum_{i=1}^n\mathcal{R}_0^ip^*_i.$$
    
        Applying this measurement to our toy example we see that the symmetric case has a weighted risk of $0.4614$ and the asymmetric case has a weighted risk of $0.5365$. Assured that our model works and can be reasonably assessed, we can proceed to discuss other questions about the toy model.
        
        \subsection{Monopoly effects in the toy model}
        \begin{figure}[h!]
            \centering
            \begin{tikzpicture}
                \node(label1) at (-1.5,-0.5){\textbf{c}};

    		  \node(a)[circle, fill, inner sep =1.5pt] at (0,0){};
                \node(b)[circle, fill, inner sep = 1.5pt] at(1,0){};
                
                \node(c)[circle, fill, inner sep = 1.5pt] at(-0.25,-1){};
                \node(d)[circle, fill, inner sep =1.5pt] at (0.5,-1){};
                \node(e)[circle, fill, inner sep = 1.5pt] at(1.25,-1){};

                \node(f)[circle, fill, inner sep = 1.5pt] at(-1.75,-2){};
                \node(g)[circle, fill, inner sep =1.5pt] at (-1.25,-2){};
                \node(h)[circle, fill, inner sep = 1.5pt] at(-0.75,-2){};
                \node(i)[circle, fill, inner sep = 1.5pt] at(-0.25,-2){};
                \node(j)[circle, fill, inner sep = 1.5pt] at(0.25,-2){};
                \node(k)[circle, fill, inner sep = 1.5pt] at(0.75,-2){};
                \node(l)[circle, fill, inner sep = 1.5pt] at(1.25,-2){};
                \node(m)[circle, fill, inner sep = 1.5pt] at(1.75,-2){};
                \node(n)[circle, fill, inner sep = 1.5pt] at(2.25,-2){};
                \node(o)[circle, fill, inner sep = 1.5pt] at(2.75,-2){};

                \draw[arrows = {-Stealth[scale=1.2]}](a)--(d);
                \draw[arrows = {-Stealth[scale=1.2]}](a)--(c);
                \draw[arrows = {-Stealth[scale=1.2]}](a)--(e);

                \draw[arrows = {-Stealth[scale=1.2]}](b)--(c);
                \draw[arrows = {-Stealth[scale=1.2]}](b)--(d);
                \draw[arrows = {-Stealth[scale=1.2]}](b)--(e);
                
                \draw[arrows = {-Stealth[scale=1.2]}](c)--(f);
                \draw[arrows = {-Stealth[scale=1.2]}](c)--(g);
                \draw[arrows = {-Stealth[scale=1.2]}](c)--(h);
                \draw[arrows = {-Stealth[scale=1.2]}](c)--(i);
                \draw[arrows = {-Stealth[scale=1.2]}](c)--(j);
                \draw[arrows = {-Stealth[scale=1.2]}](c)--(k);
                \draw[arrows = {-Stealth[scale=1.2]}](c)--(l);
                \draw[arrows = {-Stealth[scale=1.2]}](c)--(m);
                \draw[arrows = {-Stealth[scale=1.2]}](c)--(n);
                \draw[arrows = {-Stealth[scale=1.2]}](c)--(o);
    
                \draw[arrows = {-Stealth[scale=1.2]}](d)--(f);
                \draw[arrows = {-Stealth[scale=1.2]}](d)--(g);
                \draw[arrows = {-Stealth[scale=1.2]}](d)--(h);
                \draw[arrows = {-Stealth[scale=1.2]}](d)--(i);
                \draw[arrows = {-Stealth[scale=1.2]}](d)--(j);
                \draw[arrows = {-Stealth[scale=1.2]}](d)--(k);
                \draw[arrows = {-Stealth[scale=1.2]}](d)--(l);
                \draw[arrows = {-Stealth[scale=1.2]}](d)--(m);
                \draw[arrows = {-Stealth[scale=1.2]}](d)--(n);
                \draw[arrows = {-Stealth[scale=1.2]}](d)--(o);
    
                \draw[arrows = {-Stealth[scale=1.2]}](e)--(f);
                \draw[arrows = {-Stealth[scale=1.2]}](e)--(g);
                \draw[arrows = {-Stealth[scale=1.2]}](e)--(h);
                \draw[arrows = {-Stealth[scale=1.2]}](e)--(i);
                \draw[arrows = {-Stealth[scale=1.2]}](e)--(j);
                \draw[arrows = {-Stealth[scale=1.2]}](e)--(k);
                \draw[arrows = {-Stealth[scale=1.2]}](e)--(l);
                \draw[arrows = {-Stealth[scale=1.2]}](e)--(m);
                \draw[arrows = {-Stealth[scale=1.2]}](e)--(n);
                \draw[arrows = {-Stealth[scale=1.2]}](e)--(o);

                 %\node(s1) at (-0.5,-1){s1};
                 %\node(s2) at (1.5,-1){s2};
                 %\node(a2) at (3.5, -1){a1};
                 %\node(a2) at (5.5, -1){a2};
    
            \end{tikzpicture}
            \hspace{1.5cm}
            \begin{tikzpicture}
    
                \node(label1) at (2.5,-0.5){\textbf{mm}};
                \node(a)[circle, fill, inner sep =1.5pt] at (0,0){};
                \node(b)[circle, fill, inner sep = 1.5pt] at(1,0){};
                
                \node(c)[circle, fill, inner sep = 1.5pt] at(-0.25,-1){};
                \node(d)[circle, fill, inner sep =1.5pt] at (0.5,-1){};
                \node(e)[circle, fill, inner sep = 1.5pt] at(1.25,-1){};

                \node(f)[circle, fill, inner sep = 1.5pt] at(-1.75,-2){};
                \node(g)[circle, fill, inner sep =1.5pt] at (-1.25,-2){};
                \node(h)[circle, fill, inner sep = 1.5pt] at(-0.75,-2){};
                \node(i)[circle, fill, inner sep = 1.5pt] at(-0.25,-2){};
                \node(j)[circle, fill, inner sep = 1.5pt] at(0.25,-2){};
                \node(k)[circle, fill, inner sep = 1.5pt] at(0.75,-2){};
                \node(l)[circle, fill, inner sep = 1.5pt] at(1.25,-2){};
                \node(m)[circle, fill, inner sep = 1.5pt] at(1.75,-2){};
                \node(n)[circle, fill, inner sep = 1.5pt] at(2.25,-2){};
                \node(o)[circle, fill, inner sep = 1.5pt] at(2.75,-2){};

                \draw[arrows = {-Stealth[scale=1.2]}](a)--(d);
                \draw[arrows = {-Stealth[scale=1.2]}](a)--(c);
                \draw[arrows = {-Stealth[scale=1.2]}](a)--(e);

                \draw[arrows = {-Stealth[scale=1.2]}](b)--(c);
                \draw[arrows = {-Stealth[scale=1.2]}](b)--(d);
                \draw[arrows = {-Stealth[scale=1.2]}](b)--(e);
                
                \draw[arrows = {-Stealth[scale=1.2]}](c)--(f);
                \draw[arrows = {-Stealth[scale=1.2]}](c)--(g);
    
                \draw[arrows = {-Stealth[scale=1.2]}](d)--(f);
                \draw[arrows = {-Stealth[scale=1.2]}](d)--(g);
                \draw[arrows = {-Stealth[scale=1.2]}](d)--(h);
                \draw[arrows = {-Stealth[scale=1.2]}](d)--(i);
                \draw[arrows = {-Stealth[scale=1.2]}](d)--(j);
                \draw[arrows = {-Stealth[scale=1.2]}](d)--(k);
                \draw[arrows = {-Stealth[scale=1.2]}](d)--(l);
                \draw[arrows = {-Stealth[scale=1.2]}](d)--(m);
                \draw[arrows = {-Stealth[scale=1.2]}](d)--(n);
                \draw[arrows = {-Stealth[scale=1.2]}](d)--(o);
    
                \draw[arrows = {-Stealth[scale=1.2]}](e)--(n);
                \draw[arrows = {-Stealth[scale=1.2]}](e)--(o);

                 %\node(s1) at (-0.5,-1){s1};
                 %\node(s2) at (1.5,-1){s2};
                 %\node(a2) at (3.5, -1){a1};
                 %\node(a2) at (5.5, -1){a2};
    
            \end{tikzpicture}\\
            \vspace{0.5cm}
            \begin{tikzpicture}
    
                \node(label1) at (-1.5,-0.5){\textbf{tm}};
                \node(a)[circle, fill, inner sep =1.5pt] at (0,0){};
                \node(b)[circle, fill, inner sep = 1.5pt] at(1,0){};
                
                \node(c)[circle, fill, inner sep = 1.5pt] at(-0.25,-1){};
                \node(d)[circle, fill, inner sep =1.5pt] at (0.5,-1){};
                \node(e)[circle, fill, inner sep = 1.5pt] at(1.25,-1){};

                \node(f)[circle, fill, inner sep = 1.5pt] at(-1.75,-2){};
                \node(g)[circle, fill, inner sep =1.5pt] at (-1.25,-2){};
                \node(h)[circle, fill, inner sep = 1.5pt] at(-0.75,-2){};
                \node(i)[circle, fill, inner sep = 1.5pt] at(-0.25,-2){};
                \node(j)[circle, fill, inner sep = 1.5pt] at(0.25,-2){};
                \node(k)[circle, fill, inner sep = 1.5pt] at(0.75,-2){};
                \node(l)[circle, fill, inner sep = 1.5pt] at(1.25,-2){};
                \node(m)[circle, fill, inner sep = 1.5pt] at(1.75,-2){};
                \node(n)[circle, fill, inner sep = 1.5pt] at(2.25,-2){};
                \node(o)[circle, fill, inner sep = 1.5pt] at(2.75,-2){};

                \draw[arrows = {-Stealth[scale=1.2]}](a)--(d);
                \draw[arrows = {-Stealth[scale=1.2]}](a)--(c);
                \draw[arrows = {-Stealth[scale=1.2]}](a)--(e);

                \draw[arrows = {-Stealth[scale=1.2]}](b)--(c);
                \draw[arrows = {-Stealth[scale=1.2]}](b)--(d);
                \draw[arrows = {-Stealth[scale=1.2]}](b)--(e);
    
                \draw[arrows = {-Stealth[scale=1.2]}](d)--(f);
                \draw[arrows = {-Stealth[scale=1.2]}](d)--(g);
                \draw[arrows = {-Stealth[scale=1.2]}](d)--(h);
                \draw[arrows = {-Stealth[scale=1.2]}](d)--(i);
                \draw[arrows = {-Stealth[scale=1.2]}](d)--(j);
                \draw[arrows = {-Stealth[scale=1.2]}](d)--(k);
                \draw[arrows = {-Stealth[scale=1.2]}](d)--(l);
                \draw[arrows = {-Stealth[scale=1.2]}](d)--(m);
                \draw[arrows = {-Stealth[scale=1.2]}](d)--(n);
                \draw[arrows = {-Stealth[scale=1.2]}](d)--(o);

                %\node(s1) at (-0.5,-1){s1};
                %\node(s2) at (1.5,-1){s2};
                %\node(a2) at (3.5, -1){a1};
                %\node(a2) at (5.5, -1){a2};

            \end{tikzpicture}
            \caption{Three trade networks with 15 players. The arrows point in the direction that goods are passed through the system (which is different from the way that the adjacency matrix $W$ encodes the information). On the \textbf{top left} is the competitive case, on the \textbf{top right} is the mild monopoly case, and on the \textbf{bottom} is the total monopoly case. We will be interested in the ``distributors" (middle row in each network) which are called ``c1,c2, and c3"  from left to right in the competitive case, ``mm1, mm2, and mm3" from left to right in the mild monopoly case, and ``tm1, tm2, and tm3" from left to right in the total monopoly case.}
            \label{Monopoly Model}
        \end{figure}
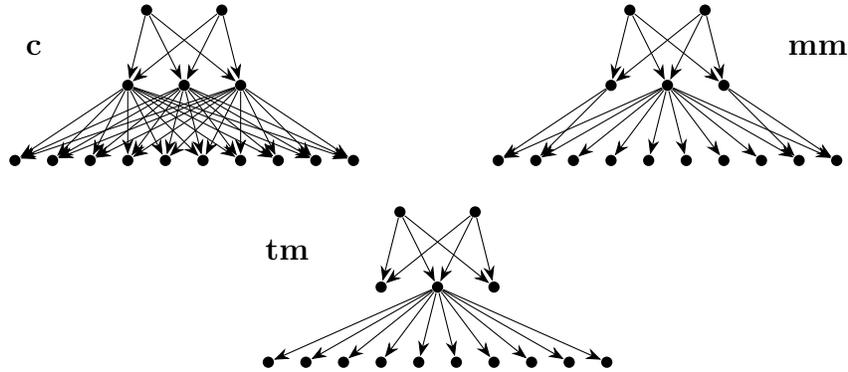

        To consider how market share plays a roll in infection risk for a wildlife trade network, we consider three networks with 15 players (Fig. \ref{Monopoly Model}). The first is the competitive network where there are two producers and three distributors (c1, c2, and c3) who each buy evenly from the two distributors. There are also 10 consumers who each buy evenly from each of the three distributors. This means that each of the distributor has exactly one third of the market share.  In the mild monopoly case there are again two producers and three distributors (mm1, mm2, and mm3) who each buy evenly from the producers. There are ten consumers. Two of them buy evenly from mm1 and mm2, two of them buy evenly from mm2 and mm3, and the remaining consumers buy exclusively from mm2. This means that mm1 and mm3 each have 10\% of the market share and mm2 has 80\% of the market share. Lastly, in the total monopoly case there are two producers and three distributors (tm1, tm2, and tm3) who buy evenly from each producer. The difference is that all of the ten consumers buy only from tm2. Clearly this means that tm2 has 100\% of the market share and tm1 and tm3 act as terminal consumers in the game.  

        \begin{table}[]
            \hskip-2cm
            %\centering
            \begin{tabular}{c|ccc|ccc|ccc|}
                &\multicolumn{3}{|c|}{Competative}&\multicolumn{3}{|c|}{Mild Monopoly}&\multicolumn{3}  {|c|}{Total Monopoly} \\
                &c1&c2&c3&mm1&mm2&mm3&tm1&tm2&tm3\\
                \hline
                Equilibrium Strategy&0.3451&0.3451&0.3451&0.4334&0.3082&0.4334&0.5506&0.3027&0.5506\\
                Infection Probability&0.3734&0.3734&0.3734&0.2628&0.4220&0.2628&0.1549&0.4031&0.1549\\
                \hline
                Na\"ive Risk& \multicolumn{3}{|c|}{0.9830}&\multicolumn{3}{|c|}{0.9771}&\multicolumn{3}{|c|}{0.9691}\\
                Weighted Risk&\multicolumn{3}{|c|}{3.2393}&\multicolumn{3}{|c|}{3.2525}& \multicolumn{3}{|c|}{3.3292}
            \end{tabular}
            \caption{A comparison of the strategies, infection probabilities, and risk measurements from the networks shown in figure \ref{Monopoly Model}}
            \label{Monopoly Table}
        \end{table}

        As the monopolistic quality of the trade network increases we see that the weighted risk increases as well although the na\"ive risk decreases. This can be explained by the fact that in a monopoly, the majority of the pathways of infection must pass through a single individual, the monopolist. This may mean that the distributors with less market share can choose strategies which result in a lower probability of infection so the na\"ive risk decreases. However, the weighted risk captures the concentration of infective pathways through the monopolist and so weights their probability of infection higher. There results are summarized in table \ref{Monopoly Table}. 
    
        In this game, it is also the case that having more consumers allows a distributor to take on a lower strategy which sometimes results in a higher probability of infection. For example compare c2, mm2, and tm2, as the market share increases the strategy they choose (0.3451, 0.3082, 0.3072 respectively). Interestingly this does not translate directly to a higher probability of infection. In the case of the total monopoly, tm2 is actually less likely to become infected than mm2 because of the upstream effects of the monopoly on the producers. 

        In a real trade network, not all stakeholders can be neatly sorted into different categories like producer, distributor, or consumer. Many stakeholders take on more than one role. This makes the results slightly harder to discuss because we can not take neat steps up and down the trade network as we can here. However, the model presented here in no way depends on the separation of stakeholders in to these categories. The only restrictive assumption is that the flow of goods is acyclic in the network. 
    
        \subsection{Upstream cascades from parameter adjustments}
        Another important note about the model is that adjustment to parameters downstream can have real effects on the equilibrium strategies upstream. To demonstrate this we change the weight on the intrinsic benefit term for the consumers in the networks shown in figure \ref{SymmetryProofOfConcept}. Ranging this weight from 0 (in the case where no consumer receives any payoff other than the cost of transacting with distributors) to 1.5 (in the case where the intrinsic benefit absolutely outweighs the cost of transacting with distributors.) We can see qualitative shifts in the behavior of distributors and even producers. 
        \begin{figure}
            %%This image will be made more clear before submission
            \centering
            \includegraphics[width=0.8\linewidth]{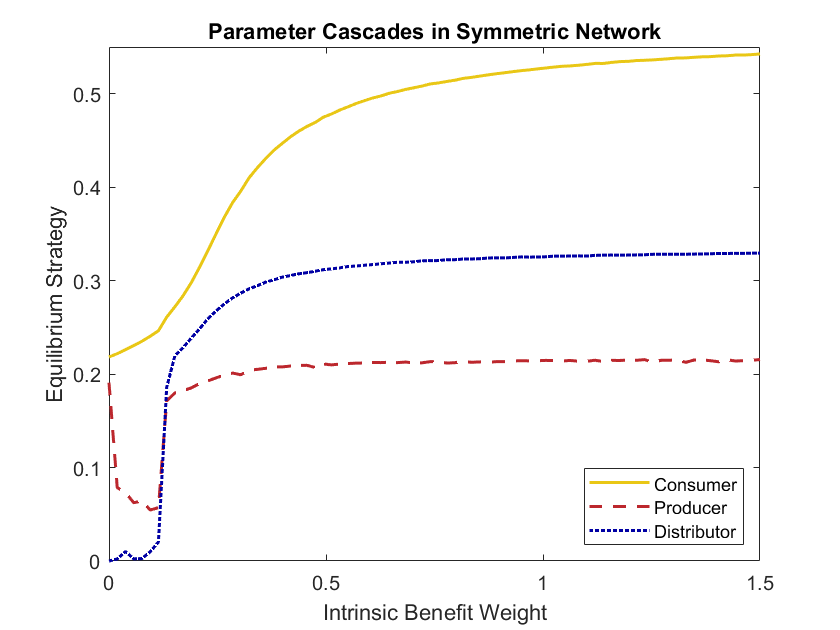}
            \includegraphics[width = 0.8\linewidth]{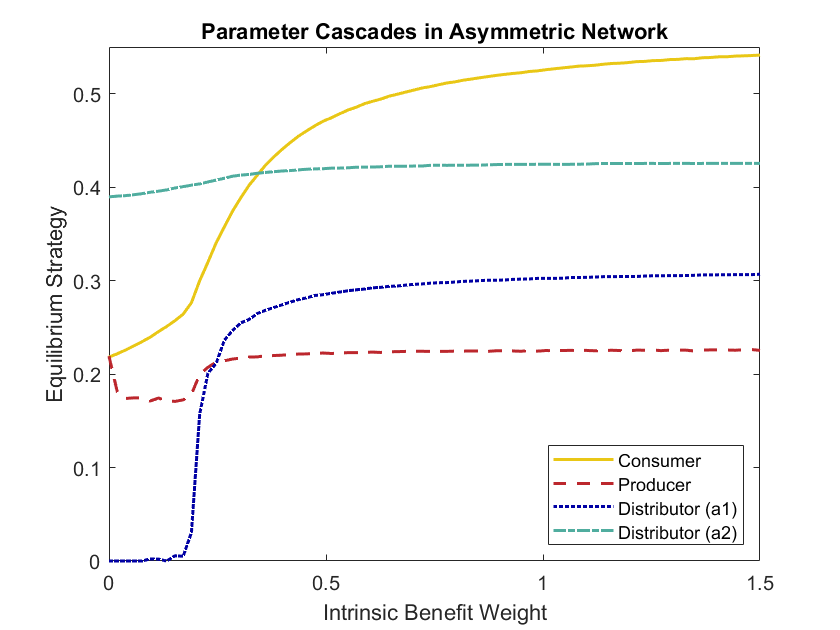}
            \caption{Changing the intrinsic benefit weight uniformly for all consumers from 0 to 1.5 shows a qualitative change in strategy for both symmetric (\textbf{top}) and asymmetric (\textbf{bottom}) examples. The consumer equilibrium strategy changes continuously with weight, the distributor strategy changes sharply at a tipping point between 0.2 and 0.3 and the producer strategy changes  abruptly at the same tipping point although this change appears to be less pronounced in the asymmetric network because of the insulating effects of the asymmetry discussed previously.}
            \label{Cascades Figure}
        \end{figure}

        The behavior shown in figure \ref{Cascades Figure} is hard to describe, especially because the toy model is not complex enough to capture real trade network-like behavior, but the main point is clear. As intrinsic Benefit is weighted less for consumers, their equilibrium strategy decreases. This is to be expected because, in the extreme limiting case without an intrinsic benefit, there is no cost of infection for consumers. The result which is at first surprising (then entirely natural once considered further) is the tipping points demonstrated by the very rapid change in equilibrium strategy taken on by the distributors and the producers. When intrinsic benefit weight shrinks below a particular threshold, distributors are no longer incentivized to invest in health and safety measures and so their equilibrium strategy drops to zero. Around the same point, the equilibrium strategy of the producers also decreases. This change for the produces is more pronounced in the symmetric case than in the asymmetric case because in the asymmetric case, the distributor which provides only to a single individual (a2) must maintain a higher equilibrium strategy and therefore provides an incentive for the producers to maintain a higher equilibrium strategy. We do not, at this moment, have an explanation for the sudden uptick in producer strategy when intrinsic benefit approaches 0. 

        This is an important feature of the model because it demonstrates its usefulness in considering interventions for disease. Throughout a trade network, different stakeholders have to play by different rules so controls at one layer of a network may be more feasible than others. By demonstrating that the entire network is sensitive to changes to parameters at a single level, we demonstrate the usefulness of this model for that purpose. This idea is discussed further in section \ref{discussion}.
    
        \subsection{Defectors in the toy model}
        Finally, we want to see the role that Defectors have in this model. In this case, we do not mean that each stakeholder has the choice to cooperate or defect, rather we mean that a single stakeholder may choose to take on a particular strategy, regardless of the payoff, because of some behavior or belief not captured by the model. For instance, choosing not to pasteurize milk because of an scientifically inaccurate political belief. 

        In the first case, we take the symmetric network in figure \ref{SymmetryProofOfConcept} and consider a consumer, who interacts with both distributors evenly. We measure average equilibrium behavior in the case that the consumer plays rationally, they defect to the strategy $x=0$ or they defect to $x=1$. The results are summarized in table \ref{ConsumerDefectTable}.

        In the second case we take the same network and consider a defecting distributor in both the extremes. The results are summarized in table \ref{DistributorDefectTable}. Lastly we discuss the same situation when a producer is the one to defect. These results are summarized in table \ref{ProducerDefectTable}.    

        \begin{table}[h!]
            \hskip-2cm
            %\centering
            \begin{tabular}{c|ccc|ccc|ccc|}
                &\multicolumn{3}{|c|}{Rational}&\multicolumn{3}{|c|}{Defect to 0}&\multicolumn{3}{|c|}{Defect to 1} \\
                &c&d&p&c&d&p&c&d&p\\
                \hline
                Equilibrium Strategy&0.5666&0.4374&0.3505&0.5668&0.4227&0.3556&0.5664&0.4444&0.3470\\
                Infection Probability&0.1236&0.1947&0.1564&0.1261&0.2052&0.1540&0.1228&0.1907&0.1595\\
                \hline
                Na\"ive Risk& \multicolumn{3}{|c|}{0.7277}&\multicolumn{3}{|c|}{1}&\multicolumn{3}{|c|}{0.6879}\\
                Weighted Risk&\multicolumn{3}{|c|}{0.4618}&\multicolumn{3}{|c|}{0.4867}& \multicolumn{3}{|c|}{0.4555}
            \end{tabular}
            \caption{A comparison of the strategies, infection probabilities, and risk measurements resulting from a defecting consumer in the symmetric trade network in figure    \           \ref{SymmetryProofOfConcept}. Equilibrium Strategy and Infection Probability are listed for a rational consumer (c), a distributor (d) and a producer (p).}
            \label{ConsumerDefectTable}
        \end{table}
        \begin{table}[h!]
            \hskip-2cm
            %\centering
            \begin{tabular}{c|ccc|ccc|ccc|}
                &\multicolumn{3}{|c|}{Rational}&\multicolumn{3}{|c|}{Defect to 0}&\multicolumn{3}{|c|}{Defect to 1} \\
                &c&d&p&c&d&p&c&d&p\\
                \hline
                Equilibrium Strategy&0.5666&0.4374&0.3505&0.5650&0.4351&0.3280&0.5695&0.4244&0.3865\\
                Infection Probability&0.1236&0.1947&0.1564&0.2178&0.2031&0.1470&0.0970&0.1958&0.1371\\
                \hline
                Na\"ive Risk& \multicolumn{3}{|c|}{0.7277}&\multicolumn{3}{|c|}{1}&\multicolumn{3}{|c|}{0.6018}\\
                Weighted Risk&\multicolumn{3}{|c|}{0.4622}&\multicolumn{3}{|c|}{2.4849}& \multicolumn{3}{|c|}{0.2671}
            \end{tabular}
            \caption{A comparison of the strategies, infection probabilities, and risk measurements resulting from a defecting distributor in the symmetric trade network in figure \ref{SymmetryProofOfConcept}. Equilibrium Strategy and Infection Probability are listed for a consumer (c), a rational distributor (d) and a producer (p).}
            \label{DistributorDefectTable}
        \end{table}
        \begin{table}[h!]
            \hskip-2cm
            %\centering
            \begin{tabular}{c|ccc|ccc|ccc|}
                &\multicolumn{3}{|c|}{Rational}&\multicolumn{3}{|c|}{Defect to 0}&\multicolumn{3}{|c|}{Defect to 1} \\
                &c&d&p&c&d&p&c&d&p\\
                \hline
                Equilibrium Strategy&0.5666&0.4374&0.3505&0.5666&0.4400&0.3521&0.5671&0.4274&0.3579\\
                Infection Probability&0.1236&0.1947&0.1564&0.1619&0.3503&0.14559&0.1147&0.1606&0.1525\\
                \hline
                Na\"ive Risk& \multicolumn{3}{|c|}{0.7277}&\multicolumn{3}{|c|}{1}&\multicolumn{3}{|c|}{0.6338}\\
                Weighted Risk&\multicolumn{3}{|c|}{0.4622}&\multicolumn{3}{|c|}{2.1790}& \multicolumn{3}{|c|}{0.314}
            \end{tabular}
            \caption{A comparison of the strategies, infection probabilities, and risk measurements resulting from a defecting producer in the symmetric trade network in figure    \ref{SymmetryProofOfConcept}. Equilibrium Strategy and Infection Probability are listed for a consumer (c), a distributor (d) and a rational producer (p).}
            \label{ProducerDefectTable}
        \end{table}

        The main result that we see from each situation is that a single defector can do very little to change the equilibrium strategy of players up- or downstream but they can have a great effect on the downstream infection probability. For this reason, we see that single defectors that are not terminal consumers can change the weighted risk of a network to an extreme degree. The na\"ive risk is nonsensical in this comparison because, when a defector defects to the strategy $x=0$, they will surely become infected and so the na\"ive risk will always be 1 in this case. 

    \section{Discussion and Applications}\label{discussion}
    Many individual aspects of strategy and payoff in trade networks with contagion have been studied independently. Building on the diversity of studies that have considered some subsets of factors relating to infection risks in animal trade networks \cite{biggs2023governance, meeks2024wildlife, morton2021impacts} and the economic factors that motivate human management/trade decisions about them \cite{zhang2013farsighted, mitchell2024growth, meeks2024wildlife, horan2015managing}, we have here proposed a model that allows for consideration of the confluence of all of these factors together.  In this work, we have addressed a gap in understanding the combined effects of each of these individual factors. While earlier related works have focused on estimating statistical likelihoods for spillover risks based on analysis and projection of observational data (e.g., \cite{albers2020disease}), or on characterizing specific trade network topologies (e.g., \cite{patel2015quantitative}), or on calculating the economic incentives of the overall system rather than of individuals in a network (e.g., \cite{horan2015managing}), we instead employ game theory on networks methods.  This allows us to consider the structure of the trade network as an independent variable and observe the resulting infection risk across the network, which we use as an estimate for both the risk of downstream contamination inside the network and spillover to na\"ive populations outside of the network.  

    Here we combine three crucial elements: (1) Economic Decision making in response to contagion, (2) Contagion dynamics in a network, and (3) Assessments of risk from that network. Our model captures the bidirectional coupling of the disease processes and the economic decision making in a way that is robust and flexible enough to be used to describe the qualitative likely behaviors and outcomes of contagion in trade networks. By using a game theoretic framework we capture the economic decision making of the stakeholders, including their responses to infection information. The disease process is captured very simply but the key insight is that, for any acyclic trade network, a stochastic infection process can be summarized with long-time expected infection probability. This expected infection probability gives us a way to calculate best responses and thus find equilibria. While this model does not have the specificity to make exact quantitative predictions, it does give ecologists, wildlife disease epidemiologists, and economists a way to describe and interrogate qualitative differences among proposed methods for intervention and/or control (e.g., via either standards for hygienic practice or economic incentive) across different network topologies. 

    Our model focuses on the scale of individual decision-makers within the trade network (as opposed to nations or sectors of the market). Each player's choices regarding trade partner connection and infection management practices are therefore emergent resulting properties of system, ultimately influenced by consumer pressures on economic and epidemiological behaviors. In this way, our model allows exploration into how intervention policies may propagate backwards through trade networks to increase healthy and protective behaviors. For any individual trade network and infection risk, some of the earlier work could be used to parameterize the explicit strategies and payoffs of our game, however our primary goal is not to make specific, quantitative recommendations for any one trade network, but rather to explore the emergent properties of all such systems and whether/how they may be influenced.
            
    The key feature of this model is that it captures the bidirectional coupling of individual choice in the optimization problem with respect to the global setting (i.e. the entire network) and the risk of infection that emerges from the global setting. Individuals make choices with risk of infection in mind, risk of infection changes as a result of these choices, and so the resulting equilibrium is not a product of either one individually but rather a product of their interactions. This means that, in addition to controls on contagion entering and being passed through the system, controls on the payoff structure, which determine individual behavior, are also potent to change risk, even on a global scale. 

    From the numerical example, we see that changing the payoff structure for subset of players in the model (as in example 5.3) can have a potent effect upstream. This means that our model is able to tease out the impact of a single stakeholders actions. In large systems like trade networks, it is easy to believe that individual actions do not impact overall outcomes measurably but this method of modeling provides us a way of describing the consequences of individual actions. In the case of defection, we see that in the toy model one player's choice to defect does little to change the strategies of the others but does greatly change the spillover risk, especially when the defector is far upstream. Although defection is not rational, and a stakeholder acting in their own best interest would not defect, it may be the case that through disinformation or some other means, stakeholders may be led to defect thinking it is in their own best interest.  This model may lend us to believe that having stakeholders who are well informed about the conditions of the network in which they trade can protect the interests of everyone in the network as well as reduce spillover risk outside the network. In addition, when considering the example of parameter cascades we see that changing incentive structures can have large impacts on the network as a whole. This is of note because, in the application area, it means that should external controls of payoff structures (e.g. subsidies for health and safety measures) be implemented in such a trade network, uniform enforcement in required in order for the controls to have the desired effect.

    Although our toy model does little to capture the specific conditions of a real life trade network, it does help us tease apart the mechanisms of upstream decision cascades. When an individual downstream has a change to their payoff function, their best response changes accordingly. When players downstream care more about health practices they force their upstream neighbors to invest more heavily in such practices, not for health's sake but rather driven by profit maximization. The effect is more noticeable when considered in the opposite direction. When consumers care little for health and safety practices, those decisions are readily passed upstream to distributors and producers. Distributors are free to maximize their profits by not investing at all in the health of their stock, so long as the consumers do not care. This may demonstrate the important insight that the upper limit for investment in health and safety upstream is difficult to increase from a downstream position but the floor is almost entirely dependent of the strategies of the downstream stakeholders. Strategies of distributors are limited above by the strategies of the producers upstream, those stakeholders which set the prices at which the distributors must operate. However, the lower limit is set by the stakeholders downstream and their considerations for the health of the products being distributed. 

    This insight, interestingly, does not entirely extend to the case of defection from a downstream position. When a consumer defects we see that the strategies of the distributors and the producers change very little. This is due to the particular form of the toy model. When a consumer defects to zero, the stakeholders upstream of that consumer get a payoff of exactly $w_{j,i}$ regardless of infection status, this means that in the optimization process, this terms in a constant and does not have an impact of the location of the optimum. Likewise, when a consumer defects to one, the stakeholders upstream get no payoff from transacting with the defector and so of course it will not impact the optimum. In either case, it is the same as if the defector was just removed from the network. Additionally, because consumers do little to impact the $R_0$ of the network, this defection has almost no measurable impact on the system as a whole. A single defection is dangerous only when it occurs upstream because such upstream effects can lead to increased infection throughout the entire network. It may also be the case that with sufficiently many consumers defecting, the distributors, beginning to be considered terminal stakeholders in the network, may have a qualitative shift in their behavior. This is why we note that a \textit{single} defection is only dangerous when it occurs upstream.

    This toy model, of course, does little to tell us about any particular real-world trade network, but it does demonstrate a promising new way to conceptualize and model these systems. In particular, it gives a direct way to make qualitative predictions about spill over risk and intrinsic risk given different control scenarios. Both individual controls and network wide controls can easily be tested through this model and both global and local measures of risk can easily be observed. The model is highly flexible and, provided a sensible tie-breaking ordering is supplied, a best response function can almost always be proven to exist. The computation required to do the optimization \cite{McAlisterSpilloverModel} is not computationally expensive and so, even with more complicated, nonlinear, functional forms, this model can be used numerically without issue to draw conclusions about multiple measures of risk in and around trade networks. 
            
    Of course, were we to attempt to parameterize our model to accurately and precisely reflect any individual real-world wildlife trade network, we could quantify the sensitivity of the relative outcomes for the emergent network structure of trade partnerships/volume and health protective behaviors. However, obtaining these data is itself so complicated as to be prohibitive during these first efforts. Recent work to understand consumers and trade participants beliefs about health practices have begun the costly and painstaking survey work needed to understand economic- and values-based incentives at work \cite{cavasos2023understanding, cavasos2023attitudes}. Estimating the probability of introduction of infection into a supply chain from the diversity of sources from which animals are provided varies greatly not only by pathogen of concern, but by also by a vast diversity of extrinsic environmental and ecological factors \cite{smith2017summarizing, can2019dealing}. Estimating the impact of hygienic practices faces the challenge not only of careful lab study, but then of understanding how recommended protocols may be enacted by untrained individuals \cite{berg2020compliance}. Each of these elements are individually complicated and costly to characterize well, and we therefore do not focus on any analysis of carefully parameterized scenarios - rather, we focus on the system-wide causes and consequences of frame shifts in expected behaviors and outcomes. 

    We have demonstrated a new way of capturing the bidirectional coupling of economic decision making and infection dynamics in trade networks of products susceptible to contagion. With the key insight that, as long as the trade network is acyclic, the long time probability of infection is stable and directly computable, we are able to find best responses with a very general set of payoff functions in this framework. With computable best responses and infection probabilities, the model we propose allows us to consider equilibria in trade network systems and measure the associated risk. This means this flexible and robust model is a potent tool to help us answer questions about controls in networks and their effects on infection/spillover risk.

    \section{Funding Statement}
    This work was supported by NSF DEB \#2207922 with additional support from NSF DBI \#2412115.
    \section{Data Statement}
    All the data in this manuscript was synthetically generated by the model using the code found in the repository \url{https://github.com/feffermanlab/JSM_2024_WildlifeTradeNetworks}
    \appendix 
        \section{A Necessary Lemma}
        \begin{lemma}[An elementary analysis lemma]
            Suppose $\vec x_k$ is a real sequence which converges to $\vec x$. Let $a,b:\mathbb{R}^n\rightarrow \mathbb{R}$ which are both continuous at $\vec x$, with $|b(\vec x)|<1$. Then the sequence given by the recursive form $y_k = a(\vec x_{k})+b(\vec x_{k})y_{k-1}$ has a limit 
            \begin{equation}
                \lim y_k = \frac{a(\vec x)}{1-b(\vec x)}
            \end{equation}
        \end{lemma}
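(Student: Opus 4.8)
The plan is to show the sequence converges to the unique fixed point of the limiting affine map. Set $y^\star := \frac{a(\vec x)}{1-b(\vec x)}$, which is well defined because $|b(\vec x)|<1$ forces $1-b(\vec x)\neq 0$, and observe that $y^\star = a(\vec x)+b(\vec x)y^\star$. Introduce the error $e_k := y_k - y^\star$. Subtracting the fixed-point identity from the recursion gives
\begin{equation*}
    e_k = b(\vec x_k)\,e_{k-1} + \delta_k, \qquad \delta_k := \big(a(\vec x_k)-a(\vec x)\big) + \big(b(\vec x_k)-b(\vec x)\big)y^\star .
\end{equation*}
Since $\vec x_k\to\vec x$ and $a,b$ are continuous at $\vec x$, we have $b(\vec x_k)\to b(\vec x)$ and $\delta_k\to 0$.

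Next I would exploit $|b(\vec x)|<1$ to obtain an eventual uniform contraction factor. Fix any $r$ with $|b(\vec x)|<r<1$. By continuity of $b$ at $\vec x$ there is $K_0$ such that $|b(\vec x_k)|\le r$ for all $k>K_0$; also, being convergent, $(\delta_k)$ is bounded, say $|\delta_k|\le D$ for all $k$. Then for $k>K_0$,
\begin{equation*}
    |e_k| \le r\,|e_{k-1}| + |\delta_k| \le r\,|e_{k-1}| + D,
\end{equation*}
and iterating this from $K_0$ yields $|e_k|\le r^{\,k-K_0}|e_{K_0}| + D/(1-r)$, so $(e_k)$ is bounded.

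With boundedness in hand I would pass to the limit superior. Let $L:=\limsup_{k\to\infty}|e_k|<\infty$. Taking $\limsup$ in $|e_k|\le r|e_{k-1}|+|\delta_k|$ and using $|\delta_k|\to 0$ together with $r>0$ gives $L\le rL$, hence $(1-r)L\le 0$ and therefore $L=0$. Thus $e_k\to 0$, i.e.\ $y_k\to y^\star=\frac{a(\vec x)}{1-b(\vec x)}$, as claimed.

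As for the main obstacle: there is no deep difficulty, but the one point needing care is that the multiplier $b(\vec x_k)$ is not a fixed contraction constant and a priori nothing bounds $y_k$; both issues are resolved precisely by the hypothesis $|b(\vec x)|<1$, which—through continuity—supplies an eventual uniform bound $r<1$ on $|b(\vec x_k)|$ and thereby both the boundedness of $(e_k)$ and the final $\limsup$ contraction. An alternative to the $\limsup$ step is to unroll the recursion explicitly as $e_k = \big(\prod_{j=K_0+1}^{k} b(\vec x_j)\big)e_{K_0} + \sum_{j=K_0+1}^{k}\big(\prod_{i=j+1}^{k} b(\vec x_i)\big)\delta_j$ and bound it using $|b(\vec x_j)|\le r$ plus the standard fact that a geometrically weighted average of a null sequence is null; I would keep the $\limsup$ version since it is shorter.
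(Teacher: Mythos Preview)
Your proof is correct and substantially cleaner than the paper's. The two arguments are genuinely different in structure.

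The paper proceeds by case analysis on the sign of $b(\vec x)$ (positive, zero, negative). In each case it constructs explicit upper and lower bounding sequences for $y_k$ by replacing $a(\vec x_k)$ and $b(\vec x_k)$ with $a(\vec x)\pm\delta$ and $b(\vec x)\pm\delta$, sums the resulting geometric series, and then applies a squeeze argument in $k$ followed by a continuity argument in $\delta$. The negative case requires an alternating ``zig-zag'' bounding scheme, which is why the case split is needed. The paper also appends the side condition that $y_k>0$ for all $k$; this is what makes the monotone bounding work (e.g.\ $b^- y_{k-1}\le b(\vec x_k)y_{k-1}$ needs $y_{k-1}\ge 0$).

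Your approach avoids all of this by passing to the error $e_k=y_k-y^\star$, recognising the recursion as an eventually contractive affine map with vanishing forcing $\delta_k$, establishing boundedness, and closing with the one-line $\limsup$ inequality $L\le rL$. This is sign-agnostic, requires no positivity hypothesis on $y_k$, and handles all three of the paper's cases uniformly. The paper's method has the minor pedagogical virtue of producing explicit finite-$k$ envelopes for $y_k$, but at the cost of length and an extra assumption; your argument is both shorter and strictly more general.
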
 so long as $a$ and $b$ are such that $y_k >0$ for all $k$.
    
        \begin{proof}
            The proof is broken into three cases. The first is the case where $b(x)>0$. For any $\delta$ there is a $K_0$ such that $|b(x_k)-b(x)|<\delta $  and $|a(x_k)-a(x)|<\delta$ when $k>K_0$. Select $\delta$ so that $0<b(x)- \delta<b(x)+\delta <1$ and so that it satisfies a condition which will come later in the proof. By selecting a $\delta$ we get a requisite $K_0$.
            It is no loss of generality to reindex so that $(x_0,y_0)=(x_K,y_K)$.
    
            Throughout the proof I will use $a^+:=a(x)+\delta, a^-:=a(x)-\delta, b^+:=b(x)+\delta.$ and $b^-:=b(x)-\delta$
    
            This case is quite simple, we can bound each term of the sequence below by adding the lowest possible $a(x_k)$ and the smallest possible $b(x_k)$ multiplied by the smallest possible $y_{k-1}$. Likewise we bound the sequence above by adding the largest possible $a(x_k)$ and the largest possible $b(x_k)$ multiplied by the largest possible value for $y_{k-1}$ 
            \begin{equation}
                \begin{split}
                    y_0&\leq y_0\leq y_0\\
                    a^-+b^-y_0&\leq y_1\leq a^++b^+y_0\\
                    a^-+a^-b^-+(b^-)^2y_0&\leq y_2\leq a^++a^+b^++(b^+)^2y_0\\
                    a^-(1+b^-+(b^-)^2)+(b^-)^3y_0&\leq y_3\leq a^+(1+b^++(b^+)^2)+(b^+)^3y_0\\
                    &\vdots\\
                    a^-\left(\sum_{i=0}^{k-1}(b^-)^i\right) + y_0(b^-)^k&\leq y_k\leq a^+\left(\sum_{i=0}^{k-1}(b^+)^i\right) + y_0(b^+)^k
                \end{split}
            \end{equation}
            Define the following functions
            \begin{equation}
                \begin{split}
                    f_k^-(\delta) &= (a(x)-\delta)\frac{1-(b(x)-\delta)^k}{1-b(x)+\delta} + y_0(b(x)-\delta)^k\\
                    f_k^+(\delta) &= (a(x)+\delta)\frac{1-(b(x)+\delta)^k}{1-b(x)-\delta}+y_0(b(x)+\delta)^k
                \end{split}
            \end{equation}
            and notice that $f_k^-(\delta)\leq y_k\leq f_k^+(\delta)$. Moreover, notice that both $f_k^+$ and $f_k^-$ are continuous at $0$ for any choice of $k$. 
    
            Let $\varepsilon>0$ and observe that $\exists K_1$ such that if $k>K_1$ then $|f_k^-(\delta)-\frac{a(x)-\delta}{1-b(x)+\delta}|<\frac{\epsilon}{2}$ and $|f_k^+(\delta)-\frac{a(x)+\delta}{1-b(x)-\delta}|<\frac{\epsilon}{2}$. Additionally, because $f(\delta):=\frac{a+\delta}{1-b-\delta}$ is continuous at $\delta =0$, $\exists$ a $\zeta$ such that $|\delta|<\zeta \implies |\frac{a+\delta}{1-b-\delta}-\frac{a}{1-b}|<\frac{\varepsilon}{2}$. Therefore, for any $\varepsilon$ there exists a $\zeta$ and a $K_1$ such that if $k>K_1$ and $\delta<\zeta$
            \begin{equation}
                \begin{split}
                \left|f^+_k(\delta)-\frac{a(x)}{1-b(x)}\right|&\leq \left|f_k^+(\delta)-\frac{a(x)+\delta}{1-b(x)-\delta}\right|+\left|\frac{a(x)+\delta}{1-b(x)-\delta}-\frac{a(x)}{1-b(x)}\right|\\
                    &\leq \frac{\varepsilon}{2}+\frac{\varepsilon}{2}=\varepsilon\\
                \left|f^-_k(\delta)-\frac{a(x)}{1-b(x)}\right|&\leq \left|f_k^-(\delta)-\frac{a(x)-\delta}{1-b(x)+\delta}\right|+\left|\frac{a(x)-\delta}{1-b(x)+\delta}-\frac{a(x)}{1-b(x)}\right|\\
                    &\leq \frac{\varepsilon}{2}+\frac{\varepsilon}{2}=\varepsilon
                \end{split}
            \end{equation}
    
            Thus we say that 
            \begin{equation}
                \begin{split}
                    f_k^-(\delta)\leq& y_k\leq f_k^+(\delta)\\
                    f_k^-(\delta)-\frac{a(x)}{1-b(x)}\leq&y_k-\frac{a(x)}{1-b(x)}\leq f_k^+(\delta)-\frac{a(x)}{1-b(x)}\\
                    -\varepsilon\leq &y_k-\frac{a(x)}{1-b(x)}\leq \varepsilon
                \end{split}
            \end{equation}
    
            Observe that this $\zeta$ depends only of $a(x)$ and $b(x)$ so surely for any $\varepsilon$ we can find a $\delta<\zeta$ such that the above inequality is satisfied. Recall that we reindexed so that $(x_0,y_0)=(x_{K_0},y_{K_0})$. By construction, $K_0$ was such that $|b(x_k)-b(x)|<\delta<\zeta$ and $|a(x_k)-a(x)|<\delta<\zeta$ for $k>k_0$ and such a $K_0$ can be found for any value of $\zeta$ because $x_k\rightarrow x$.
            Let $K^*=K_0+K_1$ and we see that $\exists K^*$ such that $|y_k-\frac{a(x)}{1-b(x)}|<\epsilon$ when $k>K^*$. This can be done for any $\varepsilon$ so indeed $y_k\rightarrow \frac{a(x)}{1+b(x)}$. This completes the proof in the case $b(x)>0$.
    
            Suppose $b(x)=0$. For a $\delta$ which is determined later in the proof, there is a $K_0$ such that $|b(x_k)|<\delta$ and $|a(x_k)-a(x)|<\delta$. Using the same notation as before I will show that $y_k\rightarrow \frac{a(x)}{1-b(x)}=a(x)$. This time I will bound every term in the sequence by either adding the smallest $a(x_k)$ and the most negative $b(x_k)$ times the $y_{k-1}$ with the greatest magnitude or by adding the largest $a(x_k)$ and the most positive $b(x_k)$ multiplied by the largest possible magnitude of $y_{k-1}$. Again, reindex so that $(x_0,y_0)=(x_{K_0},y_{K_0})$
    
            \begin{equation}
                \begin{split}
                    y_0&\leq y_0\leq y_0\\
                    a^--\delta y_0&\leq y_1\leq a^++\delta y_0\\
                    a^--\delta(a^++\delta y_0)&\leq y_2\leq a^++\delta(a^++\delta y_0)\\
                    a^--\delta a^+(1+\delta)-\delta^3 y_0&\leq y_3\leq a^+(1+\delta +\delta^2)+\delta^3y_0\\
                    &\vdots\\
                    a^--\delta a^+\left(\sum_{i=1}^{k-2}\delta^i\right) -\delta^ky_0&\leq y_k\leq a^+\left(\sum_{i=1}^{k-1}\delta^i\right)+\delta^ky_0
                \end{split}
            \end{equation}
            as before I will define 
            \begin{equation}
                \begin{split} 
                    g_k^-(\delta)&= (a(x)-\delta)-\delta(a(x)+\delta) \frac{1-\delta^{k-1}}{1-\delta}-\delta^ky_0\\
                    g_k^+(\delta)&=(a(x)+\delta)\frac{1-\delta^{k}}{1-\delta}+\delta^ky_0    
                \end{split} 
            \end{equation}
            and again we note that continuous $\delta =0$ for any choice of $k$.
    
            As before, let $\varepsilon>0$ and observe that $\exists K_1$ such that $|g_k^-(\delta)-(a(x)-\delta)+\delta\frac{a(x)+\delta}{1-\delta}|<\frac{\varepsilon}{2}$ and $|g_k^+(\delta)-(a(x)+\delta)\frac{1}{1-\delta}|<\frac{\varepsilon}{2}$ when $k>K_1$.
            It is clear to see also that $\exists \zeta$ such that if $|\delta|<\zeta$ then $|(a(x)-\delta)+\delta\frac{a(x)+\delta}{1-\delta}-a(x)|<\frac{\varepsilon}{2}$ 
            and $|(a(x)+\delta)\frac{1}{1-\delta}|<\frac{\varepsilon}{2}$. Therefore we can say that there exists a $K_1$ and $\zeta$ such that $k>K_1$ and $\delta<\zeta$ implies that
            \begin{equation}
                \begin{split} 
                    |g_k^+(\delta)-a(x)|&\leq \left|g_k^+(\delta)-(a(x)+\delta)\frac{1}{1-\delta}\right|+\left|(a(x)+\delta)\frac{1}{1-\delta}-a(x)\right|\\
                    &\leq \frac{\varepsilon}{2}+\frac{\varepsilon}{2}=\varepsilon\\
                    |g_k^-(\delta)-a(x)|&\leq \left|g_k^-(\delta)-(a(x)-\delta)+\delta\frac{a(x)+\delta}{1-\delta}\right|+\left|(a(x)-\delta)+\delta\frac{a(x)+\delta}{1-\delta}-a(x)\right|\\
                    &\leq \frac{\varepsilon}{2}+\frac{\varepsilon}{2}=\varepsilon
                \end{split}
            \end{equation}
            Using the same squeeze theorem argument as before we can say that that for any $\varepsilon$ there is a $K_1$ and $\zeta$ such that if $k>K_1$ and $|\delta|<\zeta$ then $|y_k-a(x)|<\epsilon$.
    
            To complete the proof in this case we recall that we reindexed so that $(x_0,y_0)=(x_{k_0},y_{k_0})$ and that, because $\zeta$ depends only on $a(x)$ we can choose a $\delta<\zeta$ and always get a requisite $K_0$ so that $|a(x_k)-a(x)|<\delta$ and $|b(x_k)-0|<\delta$ when $k>K_0$.
            
            Let $K^*=K_0+K_1$  and so we have that if $k>K^*$ $|y_k-a(x)|<\varepsilon$. This can be done for any $\varepsilon$ so $y_k\rightarrow a(x)$.
    
            The most complicated version of the proof is in the case $b(x)<0$. Again $\exists K_0$ such that $|b(x_k)-b(x)|<\delta $and $|a(x_k)-a(x)|<\delta$ when $k>K_0$. Select $\delta$ so that $-1<b(x)-\delta<b(x)+\delta<0$ and so that it satisfies the condition described later on in the proof. By selecting such a $\delta$ we get a requisite $K_0$. Again, WLOG, reindex so that $(x_0,y_0)=(x_{K_0},y_{K_0})$.
    
            The complication comes in because, when we seek to bound each term, we must use a ``zig-zag" argument. To get a lower bound we need to add the smallest possible value for $a(x_k)$ and the most negative value of $b(x_k)$ multiplied by the largest magnitude $y_{k-1}$ and to get an upper bound we must add the largest possible value of $a(x_k)$ to the least negative value of $b(x_k)$ multiplied by the smallest magnitude possible for $y_{k-1}$. The process proceeds as follows:
            \begin{equation}
                \begin{split}
                    y_0\leq &y_0\leq y_0\\
                    a^-+b^-y_0\leq &y_1\leq a^++b^+y_0\\
                    a^-+b^-(a^+-b^+y_0)\leq &y_2\leq a^++b^+(a^-+b^-y_0)\\
                    a^-(1+b^+b^-)+a^+(b^-)+(b^+)(b^-)^2y_0\leq &y_3\leq a^-b^++a^+(1_b^+b^-)+(b^+)^2(b^-)y_0\\
                    &\vdots
                \end{split}
            \end{equation}
            When we continue this pattern we find that when $k$ is even 
            \begin{equation}
                \begin{split}
                    y_k&\leq \overline{h_k^e(\delta)}:=a^+b^-\left(\sum_{i=0}^{\frac{k}{2}-1}(b^+b^-)^i\right)+a^-\left(\sum_{i=0}^{\frac{k}{2}-1}(b^+b^-)^i\right)+(b^+b^-)^{\frac{k}{2}}y_0\\
                    y_k&\geq \underline{h_k^e(\delta)}:=a^+\left(\sum_{i=0}^{\frac{n}{2}-1}(b^+b^-)^i\right)+a^-b^+\left(\sum_{i=0}^{\frac{n}{2}-1}(b^+b^-)^i\right)+(b^+b^-)^{\frac{k}{2}}y_0
                \end{split}
            \end{equation}
            and when $k$ is odd
            \begin{equation}
                \begin{split} 
                    y_k&\leq \overline{h_k^o}:=a^-b^+\left(\sum_{i=0}^{\frac{k-1}{2}-1}(b^+b^-)^i\right)+ a^+\left(\sum_{i=0}^{\frac{k+1}{2}-1}(b^+b^-)^i\right)+(b^+b^-)^\frac{n-1}{2}b^+y_0\\
                    y_k&\geq \underline{h_k^o}:=a^-\left(\sum_{i=0}^{\frac{k+1}{2}-1}(b^+b^-)^i\right)+ a^+b^-\left(\sum_{i=0}^{\frac{k-1}{2}-1}(b^+b^-)^i\right)+(b^+b^-)^\frac{n-1}{2}b^+y_0
                \end{split} 
            \end{equation}
            Now observe that the even and odd subsequences of these upper and lower bounds have the same limit as $k\rightarrow \infty$.
            \begin{equation}
                \begin{split} 
                    \lim\overline{h_k^o(\delta)}&=\lim \overline{h_k^e(\delta)}=\overline{h^*(\delta)}:=a^+\left(\frac{1}{1-b^+b^-}\right) +a^-b^+\left(\frac{1}{a-b^+b^-}\right)\\
                    \lim\underline{h_k^o}&=\lim \underline{h_k^e}=\underline{h^*(\delta)}:=a^+b^-\left(\frac{1}{1-b^+b^-}\right) +a^-\left(\frac{1}{a-b^+b^-}\right)
                \end{split}
            \end{equation}
            which implies that $\overline{h_k}\rightarrow \overline{h^*(\delta)}$ and $\underline{h_k}\rightarrow \underline{h^*(\delta)}$
    
            Moreover we can see that as $\delta\rightarrow 0$
            \begin{equation}
                \begin{split} 
                    \lim_{\delta\rightarrow 0}\overline{h^*(\delta)}&=a\left(\frac{b+1}{1-b^2}\right)=\frac{a}{1-b}\\
                    \lim_{\delta\rightarrow 0}\underline{h^*(\delta)}&=a\left(\frac{1+b}{1-b^2}\right)=\frac{a}{1-b}\\
                \end{split} 
            \end{equation}
    
            This mean we can use a similar argument as in the previous two cases to say that for any $\epsilon$, $\exists K_1$ and a $\zeta$ such that if $k>K_1$ and $|\delta<\zeta$ then $|y_k-\frac{a(x)}{1-b(x)}|\leq \varepsilon$. Moreover, we reindexed so that $(x_0,y_0)=(x_{k_0},y_{k_0})$ where $K_0$ was constructed such that $|a(x_k)-a(x)|<\delta <\zeta$ and $|b(x_k)-b(x)|<\delta<\zeta$ when $k>K_0$. Thus if we let $K^*=K_0+K_1$ we have a $K$ such that $|y_k-\frac{a(x)}{1-b(x)}|<\varepsilon$ whenever $k>K$. This completes the proof in the final case and so we have the desired result. 
        \end{proof}

    \bibliography{GameTheoreticWTN}
\end{document}